\newcommand{\repeattheorem}[1]{%
  \begingroup
  \renewcommand{\thetheorem}{\ref{#1}}%
  \expandafter\expandafter\expandafter\theorem
  \csname reptheorem@#1\endcsname
  \endtheorem
  \endgroup
}
\xdef\csname reptheorem@#1\endcsname{%
    \unexpanded\expandafter{\BODY}%
  }%
\unskip\label{#1}\endtheorem
\xdef\csname repproposition@#1\endcsname{%
    \unexpanded\expandafter{\BODY}%
  }%
\unskip\label{#1}\endproposition
\newcommand{\repeatproposition}[1]{%
  \begingroup
  \renewcommand{\theproposition}{\ref{#1}}%
  \expandafter\proposition
  \csname repproposition@#1\endcsname
  \endproposition
  \endgroup
}
\newenvironment{claimproof}{\begin{proof}}{\vspace{-5pt}\hfill{\small $\dashv\,$}\end{proof}}
\crefname{AlgoLine}{Line}{Lines}
\crefname{Algorithm}{Algorithm}{Algorithms}
\let\citet\textcite
\DeclareSymbolFont{cmsy}{OMS}{cmsy}{m}{n}
\DeclareMathSymbol{\cmemptyset}{\mathord}{cmsy}{"3B}
\let\emptyset\cmemptyset
\newcommand{\colsub}[2]{{{#1}\ifthenelse{\equal{#2}{}}{}{, {#2}}}}
\newcommand{\tup}[1]{{\left(#1\right)}}
\newcommand{\One}{\mathbf{1}}
\newcommand{\Field}{\mathbb{F}}
\newcommand{\Nat}{\mathbb{N}}
\newcommand{\Real}{\mathbb{R}}
\newcommand{\Rat}{\mathbb{Q}}
\newcommand{\powset}[1]{{\mathcal{P}\kern-0.3em\left(#1\right)}}
\newcommand{\atp}{\operatorname{atp}}
\newcommand{\bigml}{\big\{\kern-0.4em\big\{}
\newcommand{\bigmr}{\big\}\kern-0.4em\big\}}
\newcommand{\bigmid}{\,\big|\,}
\newcommand{\vecw}{{\vec w}}
\newcommand{\vecv}{{\vec v}}
\newcommand{\vecu}{{\vec u}}
\newcommand{\vecsig}{{\vec \sigma}}
\newcommand{\vecbeta}{{\vec \beta}}
\newcommand{\vecc}{{\vec c}}
\newcommand{\vecd}{{\vec d}}
\newcommand{\chiwl}[2]{{\chi_{{#1}\text{-}\mathsf{wl}}^\tup{#2}}}
\newcommand{\chisw}[2]{{\chi_{{#1}\text{-}\mathsf{sw}}^\tup{#2}}}
\renewcommand{\hom}{\ensuremath{\mathsf{hom}}}
\newcommand{\In}{\mathchoice{\in}{\text{ in }}{\in}{\in}}
\newcommand{\Iff}{ \text{ if and only if } }
\newcommand{\An}{ \text{ and } }
\newcommand{\Or}{ \text{ or } }
\newcommand{\Where}{ \text{ where } }
\newcommand{\emptyword}{\varepsilon}
\newcommand{\canon}[1]{{\widehat{#1}}}
\newcommand{\rk}{\operatorname{rank}}
\newcommand{\sem}[1]{\llbracket {#1} \rrbracket}
\newcommand{\Ac}{\mathcal{A}}
\newcommand{\Sig}{\Sigma}
\newcommand{\RealVkVl}
\newcommand{\RatQQk}{\Rat^{Q_k\times Q_k}}
\newcommand{\OneT}{{\One\!}^\top}
\newcommand{\pin}{{+}}
\newcommand{\ip}{{i+1}}
\newcommand{\tw}[1]{\mathsf{tw}(#1)}
\newcommand{\pw}[1]{\mathsf{pw}(#1)}
\newcommand{\td}[1]{\mathsf{td}(#1)}
\newcommand{\Wl}[2]{\ensuremath{\text{${#1}$-}\mathsf{wl}^{(#2)}}}
\newcommand{\Wll}[1]{\ensuremath{\text{${#1}$-}\mathsf{wl}}}
\newcommand{\Sw}[2]{\ensuremath{\text{${#1}$-}\mathsf{sw}^{(#2)}}}
\newcommand{\Sww}[1]{\ensuremath{\text{${#1}$-}\mathsf{sw}}}
\newcommand{\Inv}{\mathcal{I}_{\mathsf{sw}}}
\newcommand{\im}{\operatorname{Im}}
\newcommand{\sig}{\sigma}
\newcommand{\hombig}[1]{\hom\big(#1\big)}
\newcommand{\hasse}[1]{\ensuremath{\mathcal{H}(#1)}}
\newcommand{\stup}[2]{\ensuremath{[\![{#1},\,{#2}]\!]}}
\newcommand{\OO}{\mathrm{O}}
\newcommand{\foc}{\ensuremath{\mathsf{C}^{k+1}}}
\newcommand{\rfoc}{\ensuremath{\mathsf{C}^{k+1}_\mathsf{R}}}
\newcommand{\bm}[1]{#1}
\def\mD{{\bm{D}}}
\def\mF{{\bm{F}}}
\def\mM{{\bm{M}}}
\def\mP{{\bm{P}}}
\def\mT{{\bm{T}}}
\newif\ifcamera
\author{Marek \v{C}ern\'{y}\,\orcidlink{0000-0001-6013-2054}}
\authorrunning{M. \v{C}ern\'{y}}
\institute{University of Antwerp, Antwerp, Belgium\\
\email{marek.cerny@uantwerp.be}}
\author{Anonymous Authors}
\institute{Anonymous Institutes}
\title{
    On Numbers of Simplicial Walks 
    and Equivalent Canonizations
    for Graph Recognition
}
\titlerunning{On Numbers of Simplicial Walks for Graph Recognition}
\newcommand{\rev}[2]{#2}
\begin{document}
\maketitle
\begin{abstract}
Two graphs are isomorphic exactly when they admit the same number of homomorphisms from every graph.
Hence, a graph is recognized up to isomorphism by homomorphism counts over the class of all graphs.
Restricting to a specific graph class yields some natural isomorphism relaxations and modulates recognition to particular graph properties.
A notable restriction is to the classes of bounded treewidth, yielding the isomorphism relaxation of Weisfeiler--Leman refinement (WL), as shown by Dvo\v{r}\'{a}k [JGT 2010].
The properties recognized by WL are exactly those definable in fragments of first-order logic with counting quantifiers, as shown by Cai, F\"{u}rer, and Immerman [Comb. 1992].

We characterize the restriction to the classes of bounded pathwidth by numbers of simplicial walks, and formalize it into a refinement procedure (SW).
The properties recognized by SW are exactly those definable in fragments of restricted-conjunction first-order logic with counting quantifiers, introduced by Montacute and Shah [LMCS 2024].

Unlike WL, computing SW directly is not polynomial-time in general.
We address this by representing SW in terms of multiplicity automata.
We equip these automata with an involution, simplifying the canonization to standard forward reduction and omitting the backward one.
The resulting canonical form is computable in \rev{time $O(kn^{4k})$}{time\footnote{This arXiv version improves the time comp. of the LATIN 2026 paper from 
$O(kn^{4k}).$} $O(kn^{3k})$} for any graph on $n$ vertices and the restriction to pathwidth at most $k$.

\keywords{
Homomorphism indistinguishability\and
Pathwidth\and
Weisfeiler–Leman algorithm\and
Deﬁnable canonization\and
Multiplicity automata
}
\end{abstract}


\section{Introduction}\label{sec:intro}

The societal impact of foundational research is often indirect \cite{Sto+11}, yet graph theory constitutes a remarkable exception.
Many properties of graphs apply uniformly to abstract well-defined objects (e.g., algebras or categories), as well as to concrete observed ones (e.g., proteins or molecules).
An example is the Weisfeiler--Leman algorithm (WL) \cite{Wei+68, Wei+76}, which has recently established itself as a default tool to characterize (and to design) graph properties recognized by machine learning models \cite{Xu+18, Mor+19}.

The wide applicability of WL only reflects how prominently it is situated, with connections to various research areas that were uncovered over the last decades:
the \emph{$k$-dimensional WL} ($k$-WL) recognizes the same properties as all definable formulas of 
the $(k+1)$-variable fragment of \emph{first-order logic} with counting quantifiers \cite{Cai+92} denoted by \foc, and as
the \emph{homomorphism counts} over graphs of treewidth at most $k$ (graph class $\mathcal{T}_k$) \cite{Dvo+10, Del+18}.
More closely, a~homomorphism to a graph $G$ is an edge-respecting map of vertex sets with another graph $F$ as a domain.
Indeed, WL implicitly aggregates in $G$ the homomorphic images of \emph{tree-like} graphs $F$ taken over $\mathcal{T}_k.$
Finally, \citet{Lov+67} showed that considering homomorphism counts from every graph $F$ determine $G$ up to isomorphism.

On the other hand, the nature of many graph-described objects is rather linear, \emph{path-like}, such as that of proteins folded after transcribing the ribonucleic acid sequence \cite{Dur+98}. 
Consequently, some learning methods base on sampling walks in graphs achieve the state-of-the-art results, yet being outside of WL-based characterizations \cite{Kri+22, Ton+23}.
In parallel, the arboreal and path categories \cite{Abr+23} capture rather linear objects.
This relates to the restricted-conjunction $(k+1)$-variable fragment of first-order logic with counting quantifiers (logic \rfoc), and 
properties recognizable by homomorphism counts over graphs of bounded pathwidth (graph class $\mathcal{P}_{k,1}$) \cite{Mon+24}.

This paper characterizes a recognition of these path-like graph properties.
Firstly, we describe homomorphic images over graphs in classes of bounded pathwidth (in fact, more general linear classes $\mathcal{P}_{k,h}$) by
considering the first $h$~coloring steps of $k$-dimensional WL on the graph $G$ and then determining numbers of $h$-colored sequences visited by higher-order simplicial walks in graphs.

\begin{reptheorem}{main_equivalence}\label{thm:main_equivalence}
For two graphs $G$ and $H$\!, the following conditions are equivalent:
\begin{enumerate}[label=(\roman*)]
\item $G$ and $H$ have the same numbers of $h$-colored $k$-simplicial walks.
\item $G$ and $H$ have the same homomorphism counts over graphs in $\mathcal{P}_{k,h}.$
\end{enumerate}
\end{reptheorem}

Immediately, the above theorem leads to the characterization of graph properties recognized 
by the logic of restricted-conjunction $\rfoc$ from its relation to $\mathcal{P}_{k,1}$ as shown by \citet[Corollary 5.13]{Mon+24}.
\begin{corollary}\label{cor:main_equivalence_logic} For two graphs $G$ and $H$\!, the following conditions are equivalent:
\begin{enumerate}[label=(\roman*)]
\item $G$ and $H$ have the same numbers of $1$-colored $k$-simplicial walks.
\item $G$ and $H$ have the same homomorphism counts over graphs in $\mathcal{P}_{k,1}.$
\item $G$ and $H$ satisfy the same formulas definable in logic \rfoc.
\end{enumerate}
\end{corollary}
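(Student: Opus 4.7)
The plan is to prove the corollary by chaining two known equivalences that meet exactly at condition~(ii). Concretely, the equivalence $(i) \Leftrightarrow (ii)$ is obtained by instantiating Theorem~\ref{thm:main_equivalence} at $h = 1$, since the $h$-colored $k$-simplicial walks specialize to $1$-colored ones and the bounded-pathwidth class $\mathcal{P}_{k,h}$ specializes to $\mathcal{P}_{k,1}$. The equivalence $(ii) \Leftrightarrow (iii)$ is then imported directly from \citet[Corollary~5.13]{Mon+24}, which establishes that homomorphism indistinguishability over $\mathcal{P}_{k,1}$ coincides exactly with indistinguishability by \rfoc-definable formulas.

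The only thing to verify, then, is that the $h=1$ specialization of Theorem~\ref{thm:main_equivalence} really lands on the $\mathcal{P}_{k,1}$ class in the sense of \citet{Mon+24}, so that the two building blocks compose cleanly. This is a matching-of-definitions step rather than a genuine calculation: I would spell out what a $1$-colored $k$-simplicial walk is in the notation of the present paper (i.e., the first $h=1$ colorings of $k$-dimensional WL fed into the simplicial walk construction) and confirm that the associated pattern graphs are precisely those in $\mathcal{P}_{k,1}$ as defined by Montacute and Shah. No new combinatorial argument is required beyond this bookkeeping.

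I expect the main obstacle, if any, to be purely terminological: the definition of $\mathcal{P}_{k,h}$ used here may be phrased differently (e.g., in terms of path decompositions of width $k$ and bag-intersection size $h$) than the corresponding notion in \cite{Mon+24}. To resolve this, I would cite the precise definition from \cite{Mon+24} and note the natural identification with our $\mathcal{P}_{k,1}.$ Once this identification is stated, the corollary follows by transitivity $(i) \Leftrightarrow (ii) \Leftrightarrow (iii)$, and no further work is needed.
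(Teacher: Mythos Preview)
Your proposal is correct and matches the paper's approach exactly: the paper derives the corollary immediately from Theorem~\ref{thm:main_equivalence} at $h=1$ together with \citet[Corollary~5.13]{Mon+24}, with no additional argument. The terminological concern you flag is handled in the paper by Proposition~\ref{prop:cd_path_decomposition}, which identifies $\mathcal{P}_{k,1}$ with the class of graphs of pathwidth at most~$k$.
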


Secondly, we formally define simplicial walk refinement (SW), analogous to WL.
Unlike WL, which assigns to every vertex tuple a single incident object (tree-like color), SW captures incidences to multiple colored simplicial walks.
This compounds into a requirement to represent exponentially many objects in the worst case.
We overcome this by representing SW indirectly via its generators, formulated as \emph{multiplicity automata} (MAs).

The remaining challenge is to compare MAs whose minimal realizations depend on the choice of basis.
We eliminate this dependence by endowing multiplicity automata with an \emph{involution} and applying forward reduction.
Canonization is particularly important in practical scenarios, where models operate on individual graphs sampled from a data distribution.

\begin{reptheorem}{main_canonization_algorithm}\label{thm:main_canonization_algorithm}
 For $h,k\ge 1$, $n\ge 0$, there exists a canonical representation $\Inv$ of size $\OO(kn^{2k})$, computable in time \rev{$\OO(k n^{4k})$}{$\OO(k n^{3k})$},
 such that for every graph $G$ on $n$ vertices, $\Inv (G, k,h)$ recognizes exactly the same graph properties as numbers of $h$-colored $k$-simplicial walks in $G.$
\end{reptheorem}

Closely related is the decision problem of \emph{homomorphism indistinguishability}: 
two graphs on $n$ vertices are homomorphism indistinguishable over $\mathcal{F}$ if they admit the same homomorphism counts over $\mathcal{F}$. 
Recently, it was shown by \citet{Sep+24} that the case $\mathcal{F} = \mathcal{P}_{k,1}$ admits a polynomial-time algorithm of complexity $\OO(k^c n^{7k+7})$ provided that $n^{k+1} \log n \ge e^{2000}$.
Our \cref{thm:main_canonization_algorithm} gives a constructive algorithm by computing $\Inv$ for both graphs and comparing them for equality.
\begin{corollary}\label{cor:main_equivalence_algorithm}
The problem of homomorphism indistinguishability over $\mathcal{P}_{k,1}$ for two graphs $G$ and $H$ on $n$ vertices 
is decidable in time \rev{$\OO(k n^{4k})$}{$\OO(k n^{3k})$}.
\end{corollary}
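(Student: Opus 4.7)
The plan is to chain the two results already established in the excerpt: \cref{cor:main_equivalence_logic} reduces homomorphism indistinguishability over $\mathcal{P}_{k,1}$ to equality of the numbers of $1$-colored $k$-simplicial walks, and \cref{thm:main_canonization_algorithm} provides a polynomial-time computable canonical representative $\Inv(\cdot,k,1)$ whose equality captures exactly the same information. So the decision procedure is simply: compute the canonical form for each of the two input graphs, then test equality of the results.

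Concretely, first I would invoke \cref{thm:main_canonization_algorithm} with $h=1$ on the input graphs $G$ and $H$ on $n$ vertices to obtain $\Inv(G,k,1)$ and $\Inv(H,k,1)$ in time $\OO(k n^{3k})$ each. By the guarantee of that theorem, each $\Inv$ recognizes exactly the same graph properties as the numbers of $1$-colored $k$-simplicial walks, so in particular $\Inv(G,k,1)=\Inv(H,k,1)$ iff $G$ and $H$ admit the same counts of $1$-colored $k$-simplicial walks. Second, I would compare the two representatives bitwise for equality; since each has size $\OO(k n^{2k})$, this step costs $\OO(k n^{2k})$ and is dominated by the cost of the canonization. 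By \cref{cor:main_equivalence_logic} (the equivalence of items (i) and (ii) specialized to $h=1$), this equality test decides homomorphism indistinguishability over $\mathcal{P}_{k,1}$, yielding the claimed bound $\OO(k n^{3k})$.

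There is no genuine obstacle at the level of this corollary: all the difficulty is absorbed into \cref{thm:main_canonization_algorithm}, whose proof must argue that the multiplicity-automaton-with-involution framework yields a well-defined canonical form (independent of the choice of basis in the minimal realization) and that forward reduction alone suffices once the involution is in place, while keeping the running time at $\OO(k n^{3k})$. Once that theorem is available, the corollary is a one-line consequence of combining canonical-form equality with the logical/homomorphism-counting equivalence of \cref{cor:main_equivalence_logic}.
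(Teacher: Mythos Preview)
Your proposal is correct and matches the paper's own justification essentially verbatim: the paper states that \cref{thm:main_canonization_algorithm} ``gives a constructive algorithm by computing $\Inv$ for both graphs and comparing them for equality,'' which is precisely your compute-and-compare argument combined with the equivalence (i)$\Leftrightarrow$(ii) for $h=1$. The only cosmetic difference is that you route the homomorphism-count equivalence through \cref{cor:main_equivalence_logic}, whereas one could equally invoke \cref{thm:main_equivalence} directly with $h=1$.
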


\paragraph{Related work.}
The equivalence problem for finite-state automata viewed as multiplicity automata and the complexity of their equivalence was originally studied by~\citet{Tze+96}. 
The complexity of $\Rat$-MA minimization was studied in~\cite{Kie+13, Kie+17} and its numerical stability in~\cite{Kie+14}. 
MAs recognize series characterized by finite-rank Hankel matrices~\cite{Fli+74}. 
When these matrices are additionally of bounded norm, a canonical form for such MAs was introduced in~\cite{Bal+15}.
Secondly, our result on homomorphism counts extends the argument of~\cite{Cer+25} for $k=1$, relating colored walks and color refinement~\cite{Bab+79}, to simplicial walks and the WL algorithm.
Used techniques of labeled quantum graphs introduced by~\citet{Lov+09} link homomorphism indistinguishability 
to WL~\cite{Dvo+10}, quantum groups~\cite{Man+20, Nig+24}, linear equations~\cite{Gro+21b}, graph spectra~\cite{Rat+23}, 
or semidefinite programming~\cite{Rob+23}.

The rest of this paper is organized as follows. 
\cref{sec:walks} introduces the central definitions of simplicial walks.
Sections~\ref{sec:auto} and~\ref{sec:homs} each establish one of our main results, Theorems~\ref{thm:main_canonization_algorithm} and~\ref{thm:main_equivalence}, respectively.  
In \cref{sec:auto}, we begin with the problem of canonical minimal realization of involution multiplicity automata over general fields and apply our findings to the representation of simplicial walk refinement.
In \cref{sec:homs}, we introduce a caterpillar decomposition formally defining the graph classes $\mathcal{P}_{k,h}$, 
and establish connections between these classes and the numbers of $h$-colored $k$-simplicial walks.
Proofs of \emph{propositions} are deferred to appendix.


\section{Preliminaries}
For $t\In \Nat$, let $[t] \coloneq\{1, 2, \dots, n\}.$
For a set $X$, let $\powset X\coloneq \{ Y \mid Y \subseteq X\}$. 
A~$t$-tuple $\vecu \In X^{t}$ has entries $u_i \coloneqq \vecu(i)$ for $i\In [t]$,
and $X^{\le t}\coloneq X^0 \cup X^{1} \cup \cdots \cup X^{t}.$
In what follows, we view $\vecu$ as a map $\vecu\colon [k] \to X$ so that $\vecu = (u_i \mid i\In [k]).$
Replacing the $p$-th element of $\vecu$ by $x$ is indicated by $\vecu[p\mapsto x]$,
and deleting the $p$-th element of $\vecu$ is indicated by $\vecu[\hat p]$ for $x \In X$ and $p \In [k].$
Finally, $\vecu v$ indicates $(u_1, u_2, \dots, u_k, v) \In X^{k+1},$ analogously $v\vecu = (v, u_1, u_2, \dots, u_k) \In X^{k+1}.$

An \emph{undirected finite graph} $G=(V, E)$ is a pair of a finite vertex set $V=V(G)$ and a symmetric edge relation $E=E(G)\subseteq V^2$ which we denote by $u v \In E$ for $u, v\In V.$
A \emph{walk} of length $t$ in a graph $G$ is a $t$-tuple of vertices $\vecu$ such that $u_i u_{i+1} \In E$ for $i \In [t-1].$
A \emph{path} is a walk in which all vertices are distinct, and a \emph{closed walk} is a walk such that $u_1 = u_t.$
For a subset of vertices $S\subseteq V$, we denote the \emph{induced subgraph} on $S$ by $G[S]$, 
that is $G[S] = (S, E \cap S^2).$

\newcommand{\notmapsto}{\mathrel{\text{\ooalign{$\mapsto$\cr\hidewidth$\mkern2mu/\mkern2mu$\hidewidth}}}}

\rev{}{\vspace{-3pt}}
\subsubsection{Atomic types and WL refinement.}
The notion of atomicity originates from atomic formulas in logic, cf. \cite{Cha+90}.
Here, an \emph{atomic type} of a graph $G$ and $k$-tuple $\vecu\In V(G)$ denoted by $\atp_k(G, \vecu)$ is 
the class of equivalence defined between $(G, \vecu)$ and $(H, \vec v)$ as $\atp_k(G, \vecu) = \atp_k(H, \vec v)$ if and only if
$u_i = u_j \;\Leftrightarrow\; v_i = v_j$ and
$u_i u_j \In E(G) \;\Leftrightarrow\; v_i v_j \In E(H)$ for all $i, j \In [k].$


For a graph $G$ and integer $k\ge 0$, a \emph{$k$-dimensional Weisfeiler--Leman refinement ($k$-WL)} first constructs a sequence of colorings for each $h\ge 0$
and $\vecu\In V(G)^k$: 
\begin{align*}
\chiwl{k}{0} &\coloneqq \atp_{k},\quad
\chiwl{k}{h+1} \coloneqq \big(\chiwl{k}{h},\, \xi^\tup h \big),\quad\text{ where }\\
    \xi^\tup h (G, \vecu) &= \bigml
     \big(\!\atp_{k+1}(G, \vecu v),
     \chiwl{k}{h}(G, \vecu[p\mapsto v]) \mid p \In [k]
     \big)
     \bigmid
        v \In V(G)
    \bigmr.
\end{align*}
Secondly, $k$-WL defines a sequence of functions
$\Wl{k}{h}(G) \coloneqq \bigml \chiwl{k}{h}(G, \vecu) \mid \vecu \In V(G)^k\bigmr,$
and finally, $\Wll{k}(G) \coloneqq \{\Wl{k}{h}(G) \mid  h \In \Nat\}.$


\rev{}{\vspace{-3pt}}
\subsubsection{Simplices and Hasse graph.}
For a given finite set of \emph{vertices} $V,$ a \emph{(finite) simplicial complex} is a family of non-empty subsets $K\subseteq \powset{V}$ satisfying $\sig\In K$ and non-empty $\tau\subseteq \sig$ implies $\tau\In K.$
Elements of $K$ are \emph{simplices}. 
A \emph{dimension} of a simplex $\sig$ is $\dim \sig = |\sig|-1$, and $\sig$ is a \emph{$d$-simplex} if $\dim \sig = d.$
The dimension of complex $K$, denoted by $\dim K$, is the supremum of dimensions of its simplices.
A \emph{$k$-skeleton} of a simplicial complex $K$ denoted by $K(k)$ is the set of all $d$-simplices $\sig \In K$ such that $d \le k.$

For a given simplicial complex $K$ on a set $V$, there is an inherited order on its simplices given by inclusion, $(K, \subseteq).$
A \emph{Hasse graph}, denoted by $\hasse K$, is a graph with $V(\hasse K) = K$ and edges given as
follows 
$\sig, \tau\In E(\hasse K)$ if and only if $\tau$ covers $\sig$ or $\sig$ covers $\tau.$ 
From the definition of a finite simplicial complex, it follows that $\hasse K$ is a finite graph. See \cref{fig:hasse_graph}~(b).

\rev{}{\vspace{-3pt}}
\subsubsection{Multiplicity automata.}
 Let $\Sigma$ be a finite set of symbols. 
We denote the set of all finite words by $\Sigma^{<\infty}$, and by $\Sigma^t\subset \Sigma^{<\infty}$ all words of length $t.$ The \emph{empty word} is denoted by $\emptyword.$
Let $\Field$ be a field. A $\Field$-\emph{multiplicity automaton} ($\Field$-MA), 
is a tuple $\Ac = (Q, \Sigma, \mM, \alpha, \eta)$, where 
$Q$ is a finite set of \emph{states},
$\Sigma$ is a finite \emph{alphabet},
$\mM\colon \Sigma \to \Field^{Q\times Q}$ is an assignment of \emph{transition matrices}, and
$\alpha^\top, \eta \In \Field^{Q}$ are \emph{an initial}  and \emph{a final} vector, respectively.
The map $\mM$ extends uniquely (as homomorphism of monoids) to all words $\vecc = c_1 c_2 \cdots c_t \In \Sigma^{<\infty}$ by defining $\mM(\vecc) \coloneq \mM(c_1) \mM(c_2) \cdots \mM(c_t).$

The \emph{word series} over $\Sigma$ with coefficients in $\Field$ is the formal series
$\Sigma^{<\infty} \to \Field.$
An $\Field$-MA $\Ac$ \emph{recognizes} word series given by $\vecc \mapsto \alpha \mM(\vecc) \eta\In \Field$ for $\vecc \In \Sigma^{<\infty}$, denoted by $\sem{\Ac}.$
Two $\Field$-MAs $\Ac_1$ and $\Ac_2$ are \emph{equivalent} if $\sem{\Ac_1} = \sem{\Ac_2}$, that is, recognize identical word series.



\section{Simplicial Walks}\label{sec:walks}
This section introduces simplicial walks, a higher-order analogue of classical walks in graphs.
We also use them to define graph decompositions in \cref{sec:homs}.

\begin{figure}[t]
    \centering
    \includegraphics[width=0.95\textwidth]{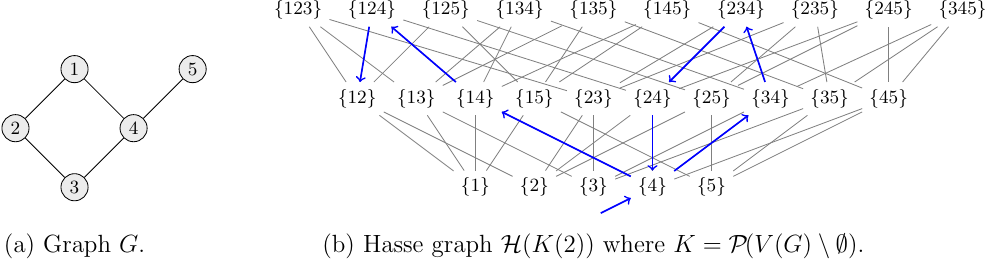}
    \caption{A $2$-simplicial walk $\vecsig$ (blue) in a graph $G.$}
    \label{fig:hasse_graph}
\end{figure}

\begin{definition}[Simplicial walk]\label{def:simplicial_complex_walk}
Let $G=(V,E)$ be a graph, $k\ge 1$ an integer, and let $K=\powset V\setminus \{\emptyset\}.$
Then a~\emph{$k$-th~simplicial walk in $G$ of length $t$} is a walk $\vecsig$ of length $t$ in the graph $\hasse {K(k)}$ such that its first simplex is of dimension $0.$
\end{definition}

It follows that for a given graph $G$, a $k$-simplicial walk $\vecsig$ of length $t$ in $G$, 
and for each $i \In [t-1]$ simplices $\sig_i$ and $\sig_\ip$ are not only adjacent in $\hasse{\powset V\setminus \{\emptyset\}}$, but moreover,
there is $u \In V(G)$ such that either $\sig_i \sqcup \{u\} = \sig_\ip\Or\sig_{i}  = \sig_\ip \sqcup \{ u \}.$ 
See \cref{fig:hasse_graph}.
We say $u$ is \emph{incoming} for edge $\sig_i\sig_\ip$ in the former case, and similarly \emph{outgoing} in the latter.
Consequently, we associate every $d$-simplex $\sig_i$ of $\vecsig$ with a tuple $\stup{\vecsig}{i} \In V(G)^{d+1}$ containing elements of $\sig_i$ ordered as their incoming edges in $\vecsig$. 
See \cref{fig:simplicial_walk}.

\begin{figure}[h]
    \centering
    \includegraphics[width=0.8\textwidth]{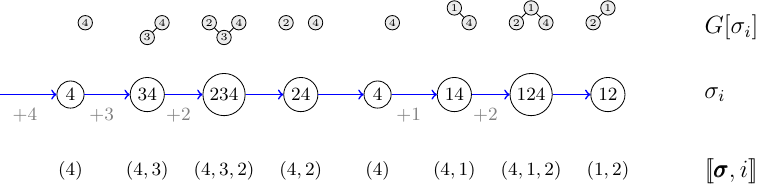}
    \caption{A simplicial walk $\vecsig$ in $G$ (Fig.~\ref{fig:hasse_graph}), with each $\sigma_i$ ordered as $\stup \vecsig i.$}
    \label{fig:simplicial_walk}
    \vspace{-10pt}
\end{figure}

\begin{definition}[Colored simplicial walk]\label{def:color_of_simplicial_walk}
Let $k, h\ge 1$, and let $\vecsig$ be a $k$-simplicial walk of length $t$ in a graph $G$,
then a \emph{$h$-color of $\vecsig$} is a word $\vecc$ such that $\vecc(i) = \chi_{k,h}(G, \stup \vecsig i)$ for $i\In [t]$,
where $\chi_{k, h}$ is given for every non-empty $\ell$-tuple $\vecu \In V(G)^{\le k+1}$ by
\begin{equation}
\chi_{k,h}(G,\vecu) \coloneqq \begin{cases}
   \chiwl{k}{k + h - \ell}(G,u_1, u_2, \ldots, u_\ell, \dots, u_\ell)
   & \quad \text{ if } \ell \le k, \\[2pt]
   \atp_{k+1}(G, u_1, u_2, \ldots, u_{k+1})
   & \quad \text{otherwise.}
\end{cases}
\end{equation}
We say that $k$-simplicial walk $\vecsig$ is an \emph{occurrence} of $h$-color $\vecc$ in $G.$
By the numbers of $h$-colored $k$-simplicial walks, we mean the numbers of such occurrences.
\end{definition}

Compared to $k$-WL refinement that colors all $k$-tuples of vertices,
we define $(k, h)$-SW refinement for all $\ell$-tuples of vertices, where $\ell\le k+1.$

\begin{definition}[SW refinement]
    Let $k,h \ge 1$ and let $G=(V,E)$ be a graph,
    then a~\emph{$(k, h)$ simplicial walk refinement}, shortly $(k, h)$-SW,
    firstly constructs a sequence of colorings for every $t\ge 0$ and $\ell$-tuple $\vecu\In V^{\le k+1}$ as follows 
    \begin{align*}
        \chisw{(k,h)}{0}(G, \vecu) &\coloneqq \bigml \chi_{k,h}(G, \vecu)\bigmr \, \text{ if }\, \ell = 1, \An \bigml \bigmr \text{ otherwise, } \\
        \chisw{(k,h)}{t+1}(G, \vecu) &\coloneqq \,\xi^\tup{t}_+(G, \vecu)\, \uplus \,\xi^\tup{t}_-(G, \vecu),
    \end{align*}
    where multisets $\xi^\tup{t}_+(G, \vecu)$ and $\xi^\tup{t}_-(G, \vecu)$ are given as 
    \begin{align*}
        \xi^\tup{t}_+(G, \vecu) =\,&\,
        \bigml \vecc \chi_{k,h}(\vecu) \,\bigmid \vecc \In \chisw{(k,h)}{t}(G, \vecu[\hat \ell\,]) \bigmr \notag\\
        &\,\text{ if } \ell > 1, \An \bigml \bigmr \text{ otherwise,}
        \notag\\
        \xi^\tup{t}_-(G, \vecu) =\,&\,\bigml \vecc \chi_{k,h}(\vecu) \,\bigmid 
        \vecc \In \chisw{(k,h)}{t}(G, \vecv),\, \vecv[\hat p] = \vecu,\,  p\in [\ell] \bigmr \\ 
        &\,\text{ if } \ell \le k, \An \,\bigml \bigmr \text{ otherwise.}
    \end{align*}
    Secondly, $(k, h)$-SW defines 
    $\Sw {(k,h)} t (G) \coloneqq \biguplus \bigml \chisw{(k,h)}{t}(G, \vecu) \mid $ $ \vecu \In V^{\le k}\bigmr,$
    and finally, $\Sww {(k,h)} (G) \coloneqq \big\{\Sw {(k,h)} t(G)\mid t\In \Nat \big\}.$
    \label{def:sw}
\end{definition}

For practicality, as clarified later in \cref{sec:homs}, the $(k,h)$-SW refinement also assigns colors to tuples with repeated vertices.
Note that every color in the image of $\chi_{k,h}$ includes its atomic type, thus records all vertex repetitions. 
This observation underlies the proof of the following proposition.

\begin{repproposition}{simpl_walks_and_walk_refiment}\label{prop:simpl_walks_and_walk_refiment}
    For $k,h \geq 1$, and a graph $G$, multiplicities in $\Sww{(k,h)}(G)$ are determined by numbers of $h$-colored $k$-simplicial walks in $G$ and vice versa.
\end{repproposition}

\section{Equivalent Automata Canonization}\label{sec:auto}
In this section, use the concept of multiplicity automata ($\Field$-MA), which can be seen as a weighted extension of finite automata. 
We define a subclass of $\Field$-MAs for which we can always find a minimal and \emph{canonical realization}.
Next, we establish automata recognizing all colored simplicial walks in a graph, and finally, obtain equivalent polynomial canonical representations.

\repeattheorem{main_canonization_algorithm}

We endow the alphabet $\Sigma$ with additional symmetry.
Generally, a tuple such as ours $(\Sigma^{<\infty},\cdot, \emptyword)$ is a monoid on the set $\Sigma^{<\infty}$ if $(\cdot)$ is an associative binary operation and $\emptyword$ is its identity element.
An \emph{involution} is a unary operation ${}^\ast\colon \Sigma^{<\infty} \to \Sigma^{<\infty}$ satisfying $(\vecc \cdot \vecd)^\ast = \vecd^\ast \cdot {\vecc}^\ast$ and $(\vecc^\ast)^\ast = \vecc$ for all $\vecc, \vecd \In \Sigma^{<\infty}.$

In the context of homomorphism tensors, \citet{Gro+21b} establish that finite-dimensional representations of monoids with involution are semisimple. 
Here, rather than relying on representation theory, we use multiplicity automata endowed with involution.
Throughout, we use $*$-notation, writing $\mM^*$ and $\vecv^*$ for adjoint matrices and vectors over $\Field$ (i.e., transposition for $\Real, \Rat$ and conjugation for $\mathbb{C}$).

\begin{definition}
A \emph{multiplicity involution automaton ($\Field$-MIA)} is a $\Field$-MA
$\Ac = (Q, \Sigma, \mM, \alpha, \eta)$ satisfying the following conditions:
\begin{enumerate}[label=(I\arabic*)]
    \item The initial and final vectors are adjoint, i.e.\ $\alpha^* = \eta.$
    \item The alphabet $\Sigma$ is endowed with an \emph{involution} ${}^*\colon\Sigma \to \Sigma,$ and for every letter $c \In \Sigma,$ it holds that $\mM(c^*) = \mM(c)^*.$
\end{enumerate}
\end{definition}
Note that by the above definition, $\mM$ extends to homomorphism of involution monoids $\Sigma^{<\infty} \to \Field^{Q\times Q}$ and $\mM(\vecc^*) = \mM(\vecc)^*$ for every word $\vecc \In \Sigma^{<\infty}.$

\begin{figure}[t]
    \centering
\begin{minipage}{0.75\textwidth}
\begin{algorithm}[H]
  \caption{Forward reduction of $\Field$-MA \cite{Tze+96, Kie+13}}\label{alg:bwbase}
        \KwIn{
        $\Ac = (Q, \Sigma, \mM, \alpha, \eta)$ 
        }
        $S \gets \emptyset$ \tcp*{finite subset of words $\Sigma^{<\infty}$}
        $\mF \gets []$ \tcp*{matrix in $\Field^{Q \times \emptyset}$}
        \renewcommand{\theAlgoLine}{3a}
        $q \gets$ [$\emptyword$] \tcp*{queue of words in $\Sigma^{<\infty}$}\label{line:queue_init_empty}
        \renewcommand{\theAlgoLine}{3b}
        $q \gets$ [$a \,|\, a \text{\textbf{ in }} (\Sigma, \le)$] \tcp*{\emph{alternative initialization} of $q$}\label{line:queue_init_alternative}
        \addtocounter{AlgoLine}{-1}
        \renewcommand{\theAlgoLine}{\arabic{AlgoLine}}
        \While{\emph{\textbf{not}} \emph{$q.$empty()}}{
            $\vecc \gets$ $q.$pop()\\
            $\gamma \gets \alpha \mM(\vecc)$\\
            \If{$\rk([\mF| \gamma]) > \rk(\mF)$}{
                $\mF \gets [\mF|\gamma]$\tcp*{add $\gamma$ as a row}
                $S \gets S \cup \{\vecc\}$\\
                    \ForEach{$a$ \emph{\textbf{in}} $(\Sigma, \le)$}{
                        $q.$push($a\vecc$)\\
                    }
            }
        }
    \Return $S, \mF$
    \tcp*{set $S$, matrix $\mF$ in $\Field^{S \times Q}$}
\end{algorithm}
\end{minipage}
\end{figure}
    
Let $\Ac$ be $\Field$-MIA on states $Q$ over alphabet $\Sigma.$
We recall  Algorithm~\ref{alg:bwbase} as in \cite{Tze+96, Kie+13} computing basis of forward space $\langle \alpha \mM(\vecw) \mid \vecw \In \Sigma^{<\infty} \rangle.$ 
Because of involution this space is isomorphic to the backward one $\langle M(\vecw)\eta \mid \vecw \In \Sigma^{<\infty} \rangle$ by $(M(\vecw)\eta)^\ast = \alpha \mM(\vecw^\ast).$
Therefore, forward reduction is sufficient to obtain minimal realization of $\sem{\Ac}.$

The reduction in Algorithm~\ref{alg:bwbase} computes base matrix $\mF \In \Field^{S \times Q},$ where $S\subseteq\Sigma^{<\infty},$ and $|S| \le |Q|.$ 
Note that the third line has two options: line~${3a}$ corresponding to the standard version, or line~${3b}$ (ours) enforcing that set $S$ does not contain $\emptyword$, which we use later.
For each letter $a\In\Sigma$ we have:
\begin{align}
    \canon \mM(a) = \mF\mM(a)\mF^\pin, \quad \canon{\alpha} = \alpha \mF^\pin, \quad \canon \eta = \mF\eta,
    \label{eq:canon}
\end{align}
where $\mF^\pin$ denotes the right inverse $\mF^*(\mF\mF^*)^{-1}.$ Equivalently, we have $\canon \mM(a)\mF = \mF\mM(a)$ and $\canon \alpha \mF = \alpha.$ 
We call $\canon \Ac = (S, \Sigma, \canon \mM, \canon \alpha, \canon \eta)$, the $\Field$-MIA constructed above, the \emph{canonical form} of $\Ac.$
By \cite{Sch+61} (see also \cite[Proposition~3.1]{Kie+20}), these automata are equivalent. Indeed, for $\vecc\In \Sigma^{<\infty}$ one has
\begin{align*}
    \sem{\Ac}(\vecc)
    = \alpha \mM(\vecc) \eta
    = \canon\alpha \mF \mM(\vecc)\eta
    = \canon\alpha \canon \mM(\vecc) \mF\eta
    = \canon{\alpha}\,\, \canon \mM(\vecc) \,\, \canon\eta
    = \sem{\canon \Ac }(\vecc).
\end{align*}

Next, we show that equivalence of $\Field$-MIAs minimized as above coincides with equality, assuming a fixed linear order on alphabet.
\begin{theorem}\label{thm:canon}
    Let $\Ac_1$ and $\Ac_2$ be two $\Field$-MIA over the alphabet $(\Sigma, \le)$, then 
    \begin{align*}\sem{\Ac_1} = \sem{\Ac_2} \quad \text{ if and only if } \quad \canon \Ac_1 = \canon \Ac_2.\end{align*}
\end{theorem}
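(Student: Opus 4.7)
The plan is to split the biconditional into its two directions. The ($\Leftarrow$) direction follows directly from the identity $\sem{\Ac}(\vecc) = \sem{\canon\Ac}(\vecc)$ already displayed immediately before the theorem: if $\canon\Ac_1 = \canon\Ac_2$, then $\sem{\Ac_1} = \sem{\canon\Ac_1} = \sem{\canon\Ac_2} = \sem{\Ac_2}$. For ($\Rightarrow$), write $f \coloneqq \sem{\Ac_1} = \sem{\Ac_2}$ and consider the Hankel matrix $H_f$ with $H_f[\vecc,\vecd] = f(\vecc\vecd)$, which depends on $f$ alone. The strategy is to argue that Algorithm~\ref{alg:bwbase}, although phrased in terms of the forward vectors $\alpha\mM(\vecc)\In\Field^{Q}$ of a particular $\Field$-MIA $\Ac$, in fact carries out a computation that can be read off from $H_f$ alone.

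The technical core is the following key lemma: for any $\Field$-MIA $\Ac$ recognizing $f$ and any words $\vecc,\vecd_1,\dots,\vecd_m$, the forward vector $\alpha\mM(\vecc)$ lies in $\langle \alpha\mM(\vecd_i)\rangle_\Field \subseteq \Field^Q$ if and only if the Hankel row $H_f[\vecc,\cdot]$ lies in $\langle H_f[\vecd_i,\cdot]\rangle_\Field$. One direction is immediate: if $\alpha\mM(\vecc) = \sum_i \lambda_i \alpha\mM(\vecd_i)$, then multiplying on the right by $\mM(\vecu)\eta$ yields $H_f[\vecc,\cdot] = \sum_i \lambda_i H_f[\vecd_i,\cdot]$. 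The converse is where the involution is indispensable: if $H_f[\vecc,\cdot] = \sum_i \lambda_i H_f[\vecd_i,\cdot]$ and one sets $v \coloneqq \alpha\mM(\vecc) - \sum_i \lambda_i \alpha\mM(\vecd_i)$, then $v\mM(\vecu)\eta = 0$ for every word $\vecu$. Using $\alpha^* = \eta$ and $\mM(c^*) = \mM(c)^*$, the vector $v^*$ lies in the backward space $\langle \mM(\vecu)\eta\rangle_\Field$, so in particular $v v^* = 0$; over $\Rat, \Real, \mathbb{C}$ this evaluates to $\sum_j |v_j|^2 = 0$ and forces $v = 0$.

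Given the key lemma, both pieces of ($\Rightarrow$) fall out. Algorithm~\ref{alg:bwbase} processes words in the deterministic BFS order driven by $(\Sigma, \le)$, admitting a word $\vecc$ exactly when $\alpha\mM(\vecc)$ is independent from the previously admitted forward vectors; by the key lemma this test is equivalent to the same test on Hankel rows, so the set $S_i$ produced for $\Ac_i$ depends on $f$ alone, whence $S_1 = S_2$. For the transition data, $\canon\eta[\vecc] = \alpha\mM(\vecc)\eta = f(\vecc)$ is visibly $f$-determined, while the rows of $\canon\mM(a)$ and the vector $\canon\alpha$ are determined by the unique coefficients expressing $\alpha\mM(\vecc a)$ and $\alpha\mM(\emptyword)$ in the basis $\{\alpha\mM(\vecd)\}_{\vecd \In S}$; by the key lemma these coincide with the unique coefficients expanding $H_f[\vecc a, \cdot]$ and $H_f[\emptyword, \cdot]$ in the linearly independent rows $\{H_f[\vecd, \cdot]\}_{\vecd \In S}$, so they too depend only on $f$ and $S$. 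I expect the main obstacle to be precisely the converse direction of the key lemma: without involution, forward reduction alone need not minimize and equivalent automata could yield forward vectors with incompatible dependence patterns; the MIA hypothesis is what aligns forward and backward spaces via $(\cdot)^*$, after which positive-definiteness of $v v^*$ over $\Rat, \Real, \mathbb{C}$ closes the loop.
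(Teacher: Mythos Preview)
Your proof is correct and takes a somewhat different route from the paper's own argument. The paper proceeds by direct entrywise computation: it shows that the Gram matrix $\mF\mF^*$ satisfies $\mF\mF^*(\vecc,\vecd) = s(\vecc\vecd^*)$ (using the involution to rewrite $(\alpha\mM(\vecd))^* = \mM(\vecd^*)\eta$), and similarly $\mF\mM(a)\mF^*(\vecc,\vecd) = s(\vecc a\vecd^*)$ and $\alpha\mF^*(\vecc) = s(\vecc^*)$; since $\canon\mM(a) = \mF\mM(a)\mF^*(\mF\mF^*)^{-1}$ and $\canon\alpha = \alpha\mF^*(\mF\mF^*)^{-1}$, everything is read off from $s$. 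Your approach instead packages the same involution trick into a single key lemma equating linear dependence among forward vectors with linear dependence among Hankel rows, and then deduces both that $S$ is $f$-determined and that the basis-expansion coefficients defining $\canon\mM(a),\canon\alpha$ are $f$-determined. The two arguments use the involution and the positive-definiteness $vv^*=0\Rightarrow v=0$ in the same essential way; the paper's version is shorter and more computational, while yours is more conceptual and is more explicit about why the state set $S$ itself (not just the entries indexed by it) is the same for equivalent automata---a point the paper leaves implicit in the observation that the rank tests of Algorithm~\ref{alg:bwbase} are rank tests on the Gram matrix $\mF\mF^*$.
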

\begin{proof}
    Let us denote $\Ac = (Q, \Sigma, \mM,\alpha, \eta)$, and $\canon\Ac = (S, \Sigma, \canon \mM,\canon \alpha, \canon \eta).$
    Now, we show that this choice is indeed canonical by showing that entries of matrices and vectors of $\canon \Ac$ are 
    determined purely by $s\coloneqq\sem{\Ac} = \sem{\canon\Ac}.$ 
    Consider $\mF\In \Field^{S\times Q}$ computed by Algorithm~\ref{alg:bwbase} with $\Ac$ on input.
    For every word $\vecc \In S$ as a coordinate of $\canon \eta$ we have 
    \[\canon\eta(\vecc) = \mF(\vecc, -)\cdot \eta = \alpha \mM(\vecc)\cdot \eta = s(\vecc).\]
    
    For other parts of $\canon\Ac$, we first prove the following claim.
    \begin{claim}[1]\label{cl:claimFF}
    Matrices $\mF\mF^*$ and $(\mF\mF^*)^{-1} \In \Field^{S\times S}$ are determined by $s.$
    \end{claim}
    \begin{claimproof}
    It suffices that for all words $\vecc, \vecd\In S$ as coordinates of $\mF\mF^*$ we have 
    \begin{align*}
    \mF\mF^*(\vecc, \vecd) 
    &= \mF(\vecc, -) \cdot \mF(\vecd, -)^*
    = \alpha \mM(\vecc) \cdot (\alpha \mM(\vecd))^* \\
    &= \alpha \mM(\vecc) \cdot \mM(\vecd)^* \alpha^*
    = \alpha \mM(\vecc) \cdot \mM(\vecd^*) \eta = s(\vecc \vecd^*).
    \end{align*}
    \end{claimproof}
    It remains to verify for each $a \In \Sigma$ and all words $\vecc, \vecd\In S$ as coordinates that 
    \begin{align*}
        \mF\mM(a)\mF^*(\vecc, \vecd) = s(\vecc a \vecd^*),\quad\An\quad \alpha \mF^*(\vecc) = s(\vecc^*).
    \end{align*}
    We obtain $\canon \mM(a)$ and $\canon \alpha$ by multiplying with $(\mF\mF^*)^{-1}.$
    Combining with Claim~(1) yields the desired statement.
\qed\end{proof}

\newcommand{\ind}[1]{\mathds{1}\!\big\{\,{#1}\,\big\}\!}

\subsection{Simplicial walk automata}
In the rest of the section, we work over the field of rationals $\Rat.$
For a given graph $G=(V,E)$ on $n$ vertices and integers $k,h\ge 1$,
we construct $\Rat$-MIA denoted by $\Ac(G, k, h)$ representing numbers of colored simplicial walks.
We first introduce building blocks for transition matrices to be assigned to the letters of our alphabet with involution.
Let the subset $Q_k\subseteq V^{\le k}$ contain all non-empty tuples \emph{without repeating vertices}.
We use the notation $\ind{\phi}$ indicating by value $1\In\Rat$ if the condition $\phi$ is true, and $0\In\Rat$ otherwise.

\begin{figure}[t]
    \centering
    \includegraphics[width=0.95\textwidth]{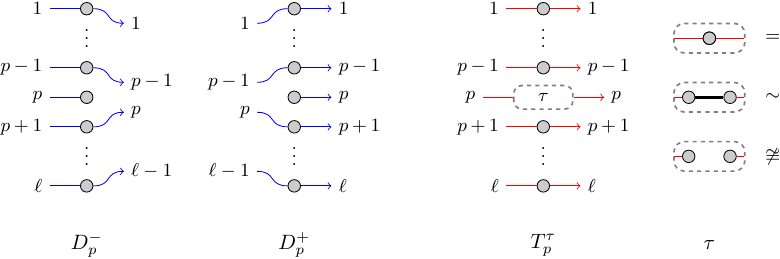}
    \caption{
         Diagrammatic view of transition matrices between $\ell$-tuple states in the simplicial walk automaton.
    }
    \label{fig:automata_base}
\end{figure}

\subsubsection{Transitions.}
We define matrices to construct the $\Rat$-MIA, see also \cref{fig:automata_base}.
First, the \emph{outgoing matrix} $\mD^-_p$ and \emph{incoming matrix} $\mD^+_p$ both in $\RatQQk$
 are defined for each $p\In[k]$ 
 and all tuples $\vecu, \vecv \In Q_k$ as 
\begin{align*}
    \mD^-_p(\vecu, \vecv) &\coloneqq \ind{\vecu[\hat p] = \vecv}, \quad\An\quad
    \mD^+_p(\vecu, \vecv) \coloneqq \ind{\vecu = \vecv[\hat p]}.
\end{align*}
Next, for an atomic type $\tau$, we define the \emph{atomic-type matrix} $\mT^\tau_p\In \RatQQk$ 
is defined for each $p\In[k]$ 
 and all tuples $\vecu, \vecv \In Q_k$ as 
\begin{align*}
    \mT^\tau_p(\vecu, \vecv) &\coloneqq \ind{\atp_2(G, \vecu(p), \vecv(p)) = \tau \An \vecu[\hat p] = \vecv[\hat p]}.
\end{align*}

Finally, for a given coloring $\chi(G, -)\colon V\to C$ and its color $a \In C$,
we define the \emph{color-partition} matrix $\mP_a\In \RatQQk$ as
\begin{align*}
    \mP_a(\vecu, \vecv) &\coloneqq \ind{\chi(G, \vecu) = a \An \vecu = \vecv}.
\end{align*}

\begin{proposition}\label{prop:matrices}
    For all $p\In [k]$, atomic types $\tau$ and colors $a, b\In C$ of a coloring $\chi(G, -)\colon V\to C$ the following properties hold.
    \begin{enumerate}[label=(\roman*)]
        \item Matrix $\mP_a$ is a projection, $\mP_a^2 = \mP_a$, and $\mP_a \mP_b = 0$ if $a\ne b.$
        \item Matrices $\mT^\tau_p$ and $\mP_a$ are symmetric, and it holds $\left(\mD^+_p\right)^\top = \mD^-_p.$
    \end{enumerate}
\end{proposition}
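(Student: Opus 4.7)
The plan is to verify each of the three identities by unfolding the indicator-based definitions and checking logical equivalences between the predicates on tuples $\vecu, \vecv \In Q_k$. Since all of $\mP_a$, $\mT^\tau_p$, $\mD^+_p$, $\mD^-_p$ have entries in $\{0,1\}$, every matrix equation reduces to a pointwise coincidence of predicates; no linear-algebraic machinery should be required.

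For part (i), I would begin by observing that the defining condition of $\mP_a$ enforces $\vecu = \vecv$, so $\mP_a$ is diagonal with diagonal entry at $\vecu$ equal to $\ind{\chi(G,\vecu) = a}$. Idempotence $\mP_a^2 = \mP_a$ then follows from $0^2 = 0$ and $1^2 = 1$, and the orthogonality $\mP_a \mP_b = 0$ for $a \ne b$ holds because a single tuple $\vecu$ cannot simultaneously carry two distinct colors, so the diagonals have disjoint supports.

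For part (ii), symmetry of $\mP_a$ is immediate since the defining conditions $\chi(G, \vecu) = a$ and $\vecu = \vecv$ are both invariant under swapping $\vecu$ and $\vecv$. For $\mT^\tau_p$ the condition $\vecu[\hat p] = \vecv[\hat p]$ is manifestly symmetric, and it remains to verify that $\atp_2(G, u_p, v_p) = \atp_2(G, v_p, u_p)$; this reduces to the symmetry of the predicates $u_p = v_p$ and $u_p v_p \In E(G)$, which holds because $G$ is undirected. The final identity $(\mD^+_p)^\top = \mD^-_p$ unfolds to $\ind{\vecv[\hat p] = \vecu} = \ind{\vecu[\hat p] = \vecv}$, a tautology.

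I do not anticipate any real obstacle in this proposition: the only moment of content is recording that $\atp_2$ is symmetric in its two arguments on an undirected graph. The value of the statement lies downstream, since these identities (projection, orthogonality, and adjointness of $\mD^+_p$ and $\mD^-_p$) are exactly what is needed to ensure that the forthcoming automaton $\Ac(G, k, h)$ satisfies the involution axiom $\mM(c^\ast) = \mM(c)^\ast$ on each letter, making it a genuine $\Rat$-MIA and thereby activating the canonization of \cref{thm:canon}.
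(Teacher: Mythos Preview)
Your proposal is correct and matches the paper's approach exactly: the paper's own proof is the single line ``From the definition,'' and your argument is precisely the routine unfolding that this line abbreviates. One tiny slip in your last step: $(\mD^+_p)^\top(\vecu,\vecv) = \mD^+_p(\vecv,\vecu) = \ind{\vecv = \vecu[\hat p]}$, not $\ind{\vecv[\hat p] = \vecu}$; with that correction the claimed tautology is indeed just symmetry of equality.
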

\begin{proof}
    From the definition.
\qed\end{proof}

\subsubsection{Alphabet.}
The alphabet $\Sig = \Sig(G, k, h)$ contains symbols of the form $(a, s, b)$ or $(a, \tau, p,  b)$, where $a, b$ are colors in $C$ of coloring $\chi(G, -)\colon V \to C$, and sign $s$ in $\{-, +\}$; or atomic type $\tau \In Y$ of $\atp_2(G, -)\colon V \to Y$ and index $p$ in $[k].$ The involution $\Sig \to \Sig$ is given by
\begin{align*}
    (a, s, b)^* = (b, -s, a)\quad\An\quad 
    (a, \tau, p, b)^* = (b, \tau, p, a).
\end{align*}
For letters $(a, s, b), (a, \tau, p, b) \In \Sig$, we define the following transition matrices
\begin{align*}
    \mM\!\left((a, s, b)\right) = \sum_{p\in[k]} \mP_a \mD^s_p \mP_b \quad\An\quad
    \mM\!\left((a, \tau, p, b)\right) = \mP_a \mT^\tau_p \mP_b.
\end{align*}
We remark $\mM((a, s, b)^*) = \mM((a, s, b))^\top$ and $\mM((a, \tau, p, b)^*) = \mM((a, \tau, p, b))^\top,$ by (ii) of \cref{prop:matrices}.

\begin{definition}[Simplicial walk automaton]
    Let $k,h\ge 1$ and $G=(V,E)$ be a~graph, then a \emph{simplicial walk automaton} $\Ac(G, k, h)$ is $\Rat$-MIA $(Q_k, \Sig, \mM, \OneT, \One)$,
    where $Q_k$ is the set of states, $\Sig = \Sig(G, k, h)$ is the alphabet, the transition matrices are $\mM((a, s, b))$ and $\mM((a, \tau, p, b))$ in $\RatQQk$ for $(a, s, b), (a, \tau, p, b) \In \Sig$, and the initial and the final vector are all-one vectors in $\Rat^{Q_k}.$
    \label{def:sw_auto}
\end{definition}

\begin{figure}[t]
    \centering
    \includegraphics[width=0.7\textwidth]{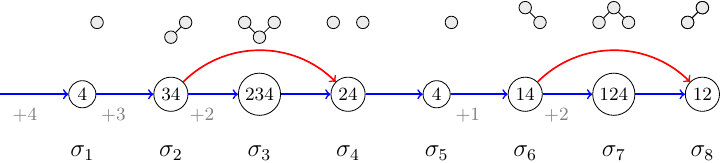}
    \caption{Illustration of $2$-simplicial walk (blue) in $G$ represented by the set of states $Q_2$ that excludes $3$-tuples, using atomic-type matrices (red).}
    \label{fig:automata_base}
\end{figure}

\begin{repproposition}{swa_equivalence}\label{prop:swa_equivalence}
    Let $k,h\ge 1$ and $G$ be a graph, then rational series $\sem{\Ac(G, k, h)}$ determines $\Sww {(k,h)} (G)$ and vice versa.
\end{repproposition}
\begin{proof}[sketch]
    The key idea is to avoid using $(k+1)$-tuples explicitly as states, which would significantly increase complexity. 
    Instead, we use atomic type matrices: $(k+1)$-tuples are colored only by $\atp_{k+1}$, which decomposes into the $\atp_k$ of neighboring tuples and the $\atp_2$ of the vertices where they differ. See \cref{fig:automata_base}. Full proof is in \cref{app:auto_proofs}.
\end{proof}

\begin{theorem}\label{thm:complexity}
    Let $k,h\ge 1$, and let $G$ be a graph, then there is an implementation of Algorithm~\ref{alg:bwbase} computing outputs for the simplicial walk automaton $\Ac(G, k, h)$ in time \rev{$\OO(k n^{4k}).$}{$\OO(k n^{3k}).$}
\end{theorem}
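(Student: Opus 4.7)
The plan is to implement Algorithm~\ref{alg:bwbase} on the simplicial walk automaton $\Ac(G,k,h)$ and account carefully for the cost of each primitive operation.

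I first fix the essential sizes. The state set $Q_k \subseteq V^{\le k}$ satisfies $|Q_k| = \OO(n^k)$, so the basis $S$ returned by Algorithm~\ref{alg:bwbase} has size $|S| \le \OO(n^k)$; the alphabet has size $|\Sig| = \OO(k n^2)$; and since each of the $|S|$ successful rank increases triggers $|\Sig|$ pushes, the main while-loop runs at most $\OO(|S|\cdot|\Sig|) = \OO(k n^{k+2})$ iterations.

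Next I would exploit the sparsity of the building blocks of $\mM$. A direct count yields $\mathrm{nnz}(\mD^{\pm}_p) = \OO(n^k)$ and $\sum_{\tau,p} \mathrm{nnz}(\mT^\tau_p) = \OO(k n^{k+1})$, so that $\sum_{a\in\Sig}\mathrm{nnz}(\mM(a)) = \OO(k n^{k+1})$. Thus, given a stored row $\gamma' = \alpha\mM(\vecc')$ of $\mF$, all children $\gamma' \mM(a)$ for $a\in\Sig$ can be computed together in time $\OO(k n^{k+1})$ via sparse vector--matrix products; summed over the basis this step costs $\OO(k n^{2k+1})$, which is already dominated by the rank-maintenance cost below.

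For the rank maintenance I would store $\mF$ in reduced row-echelon form with an explicit list of pivot columns. Instead of performing one rank test per popped word, I would process all $|\Sig|$ children of a newly added basis vector in a single batch: stack them as rows of an $|\Sig|\times n^k$ block and reduce against $\mF$ in one sweep through its pivots. Each sweep costs $\OO(n^{2k})$, summing over the $\OO(n^k)$ basis additions to $\OO(n^{3k})$ and, together with the matrix-vector step, yielding the claimed $\OO(k n^{3k})$.

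The main obstacle is precisely this amortization. A naive per-child rank test multiplies the $\OO(k n^{k+2})$ iterations by $\OO(n^{2k})$ per test and yields $\OO(k n^{3k+2})$, which improves the $\OO(k n^{4k})$ LATIN bound but is still a factor $n^2$ too large. The batched sweep absorbs the alphabet factor $|\Sig|$ into a single reduction per basis addition by using each pivot row of $\mF$ once per basis element rather than once per child, and this is the critical step for reaching the target $\OO(k n^{3k})$.
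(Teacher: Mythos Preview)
Your proposal has a genuine gap at its starting point: the alphabet size. Letters of $\Sigma(G,k,h)$ have the form $(a,s,b)$ or $(a,\tau,p,b)$ where $a,b$ range over the colours of the coloring $\chi_{k,h}(G,-)$ on $Q_k$, not over vertices; since these are $k$-WL colours of $k$-tuples there can be up to $n^k$ of them, so $|\Sigma|=\OO(kn^{2k})$ (the paper states $|\Sigma|\le 5kn^{2k}$ explicitly), not $\OO(kn^2)$. With the correct bound your iteration count becomes $\OO(|S|\cdot|\Sigma|)=\OO(kn^{3k})$, and any nonconstant per-iteration cost already exceeds the target. Your batched-sweep claim that reducing all $|\Sigma|$ children against $\mF$ costs $\OO(n^{2k})$ is therefore not supportable as stated: merely touching $|\Sigma|$ vectors of length $n^k$ is $\OO(kn^{3k})$ per basis addition.

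What the paper actually does is exploit two structural facts you do not use. First, by \cref{prop:matrices}(i) any word with consecutive letters $(\ldots,b)(c,\ldots)$ and $b\neq c$ maps to $0$, so from a stored basis vector $\alpha\mM(\vecw')$ only $\OO(kn^k)$ letters (those whose first colour matches the last colour of $\vecw'$) need to be enqueued; this already gives the $\OO(kn^{4k})$ bound. Second, for the $\OO(kn^{3k})$ bound the paper caches $\alpha\mM(\vecw')T$ for each of the $\le 5k$ ``inner'' matrices $T$, and then observes that the child $\gamma=\alpha\mM(\vecw')T\mP_d$ is supported on the colour block $d$ of size $q_d$; the rank test can thus be carried out inside that block at cost $\OO(q_d^2)$, and summing over all $d$ gives $\sum_d q_d^2\le n^{2k}$ per choice of $T$ and per basis element. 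Your sparsity analysis of $\mD^\pm_p$ and $\mT^\tau_p$ is correct and in spirit close to the first observation, but it is the block-restricted rank test that buys the final factor, and that idea is missing from your plan.
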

\begin{proof}
    Let $\Sigma = \Sigma(G, k, h)$, and $n = |V(G)|.$
    Following \citet{Tze+96} for general $\Rat$-MA, the algorithm extends matrix $F$ (line~${8}$) at most $|Q_k|$-times
    enqueuing at most $|Q_k||\Sigma|$ elements (line~${11}$).
    Each element of queue undergoes the matrix-rank query (line~${8}$) taking naively $\OO(|Q_k|^3)$, e.g., using Gaussian elimination.
    By keeping and extending a copy of matrix $F$ in reduced echelon form throughout the process, \citet{Cor+07} reduce query-time to $\OO(|Q_k|^2)$.
    This results in a time complexity of $\OO(|\Sigma||Q_k|^4)$ bounded by $\OO(k n^{5k})$, since $|\Sigma| \le 5 k n^{2k}$ and $|Q_k| \le n^{k}.$

    For our SW automata, to obtain $\chi(G, -):V\to C$, we first run $h'=\min(k+h, n^k)$ iterations of $k$-WL in time $\OO(kn^{2k+1}).$ 
    Next, observe that any word containing two consecutive letters $(a, \dots, b),(c, \dots, d)\In \Sigma$
    is of multiplicity $0$ by (i) of \cref{prop:matrices} if $b\ne c.$
    To that end, if $\vecw \neq \emptyword$, we only need to enqueue $5 k n^k$ elements (line~${11}$),
    resulting in a time complexity of $\OO(k n^{4k}).$

    \rev{}{
    Finally, to compute $\gamma= \alpha M(\vecw)= \alpha M(\vecw' (c,\ldots,d)) = \alpha M(\vecw') T P_d$ (line~${6})$,
    we not only cache $\alpha M(\vecw'),$ but also $\alpha M(\vecw') T.$ 
    Multiplication by $P_d$ then sets coordinates outside of $d$ to zero.
    Hence, it suffices to check that $\rk\left(\begin{psmallmatrix}F\\\gamma\end{psmallmatrix}P_d\right) > \rk \left(FP_d\right)$ 
    separately for each $d$-induced column subset (of size $q_d$) of the partition on $Q_k$.
    This bounds query-times to $\sum_{d\in C} q_d^2 \le n^{2k}$ for each $T$ ($\le 5k$ choices) and word length $|\vecw|\le n^k,$ resulting in the desired time complexity $\OO(kn^{3k}).$}
\qed\end{proof}

\subsection{Invariant of simplicial walk automaton}
\rev{}{\vspace{-0.5em}}
For a graph $G$, the simplicial walk automata $\Ac = \Ac(G, k, h)$ is $\Rat$-MIA, and therefore, there exists its canonical form by \cref{thm:canon}.
More closely, we use Algorithm~\ref{alg:bwbase} in the variant of line~${3b}$ ensuring $S$ does not contain $\emptyword.$
Indeed, for $\Ac$ we have dependence $\alpha = \alpha M(\emptyword) = \sum_a \alpha M((a, =, 1, a)).$ 
Hence, the forward space remains spanned by rows of matrix $\mF$, obtaining $\canon \Ac$ in \cref{eq:canon}.

We define a \emph{canonical invariant} of the simplicial walk automaton $\Ac$ using automaton $\canon{\Ac}$ as follows
\begin{align}\label{eq:inv}
    \Inv (G, k, h) &= 
    \left(
        \canon{\alpha},\,
        \canon{\eta},\,
        \canon{\mD}^+\!,\,
        \canon{\mD}^-\!,\,
        \canon{\mT}^\tau_p
        \bigmid p \In [k],\, \tau \In Y
    \right),\quad \Where\\[4pt]
    \canon{\mD}^s &= \sum_{a,b} \canon{\mM}((a, s,b)),\quad\An \quad\canon{\mT}^\tau_p = \sum_{a,b} \canon{\mM}((a, \tau, p, b)) \notag
\end{align}
for sign $s\In \{+, -\}$, atomic type $\tau \In Y$ and $a,b$ ranging over $C$ of coloring $\chi_{h,k}(G, -)\colon V(G) \to C.$
Finally, we show that invariant of \cref{eq:inv} provides desired representation.

\paragraph{Proof of \cref{thm:main_canonization_algorithm}.}
For any graph $G$ on $n$ vertices, construct its simplicial walk automaton $\Ac.$
Since $\Ac$ is a $\Rat$-MIA, apply \cref{thm:canon} to obtain the minimized automaton $\canon{\Ac}.$
This is computable in time \rev{$\OO(k n^{4k})$}{$\OO(k n^{3k})$} by \cref{thm:complexity}, and the size of $\Inv(G, k, h)$ is $\OO(k n^{2k})$ as defined.

Consider another graph $H.$ 
Then by \cref{prop:swa_equivalence}, $\canon{\Ac}(G, k, h) = \canon{\Ac}(H, k, h)$ if and only if $G$ and $H$ agree on their numbers of $h$-colored $k$-simplicial walks.
By definition of $\Inv$, we have that $\canon{\Ac}(G, k, h) = \canon{\Ac}(H, k, h)$ implies $\Inv(G, k, h) = \Inv(H, k, h).$
For the opposite direction it suffices to reconstruct $\canon{\Ac}$ from $\Inv.$

Each transition matrix $\mD$ of $\Ac$ is adjoined by a pair of partition matrices $\mP_x$ and $\mP_y$ for some colors $x, y$, i.e.  $\mD = \mD_{x,y}  =\mP_x\mD'\mP_y\In \Rat^{Q_k\times Q_k}.$ 
It follows that entry $\mD(\vecu, \vecv)$ is zero if $\chi_{k,h}(G, \vecu) \ne x$ or $\chi_{k,h}(G, \vecv) \ne y.$
Similarly, we consider its minimized version $\canon{\mD}\In\Rat^{S\times S}$ (such that $\emptyword$ is not in $S$) with entries 
\begin{align}
    \canon{\mD}(\vecc a, \vecd b) &=
    (\mF P_x \mD' P_y \mF^\pin) (\vecc a, \vecd b) = 
    (\mF P_a P_x \mD' P_y P_b \mF^\pin) (\vecc a, \vecd b), \label{eq:canon_matrix_entries}
\end{align}
that are equal to zero if $a\neq x$ or $b\neq y$ by (i) of~\cref{prop:matrices}.
Therefore, desired $\canon{\mD}_{x,y}$ is obtained from $\sum_{a,b} \canon{\mD}_{a,b}$ by all selecting entries of words in $S$ ending with $x$, resp. $y$, for the first coordinate, resp. for the second coordinate. 
Other entries are left out zero due to \cref{eq:canon_matrix_entries}.

\rev{}{\vspace{-0.5em}}
\section{Graph Recognition by Homomorphism Counts}\label{sec:homs}
\rev{}{\vspace{-0.5em}}
In this section, we formally define the graph class $\mathcal{P}_{k,h}$ (\autoref{def:cd}),
\rev{and show the following correspondence between homomorphism counts over $\mathcal{P}_{k,h}$ and numbers of $h$-colored $k$-simplicial walks.}{
and show the following connection to numbers of $h$-colored $k$-simplicial walks.}

\repeattheorem{main_equivalence}

Given two undirected graphs $F=(V(F), E(F))$ and $G=(V(G), E(G))$, 
a mapping between vertices $\varphi:V(F) \to V(G)$ is a~\emph{graph homomorphism} $F\to G$
if it respects edges, that is, there exists $\varphi(u)\varphi(v)\In E(G)$ for each $uv\In E(F).$
The homomorphism count $F$ to $G$ is denoted by $\hom(F, G) = \left|\{\varphi \mid \varphi\colon F\to G\}\right|.$

For simplicity of notation, we state the following definition equivalent to existence of path decomposition of width at most $k,$ as detailed in \cref{app:decompositions}.
\begin{definition}
    Let $F$ be a graph, and let $\vec\beta$ be a $k$-simplicial walk in $F$. 
    We say that $\vec\beta$ is \emph{decomposing} in $F$ if
    \begin{enumerate}[label=(\roman*)]
        \item[(D1)] Every $u \In V(F)$ is incoming exactly once in $\vecbeta$.
        \item[(D2)] Every edge $uv \In E(F)$ is contained in at least one $\beta_i$.
    \end{enumerate}
\end{definition}
\begin{repproposition}{decomposing_simplicial_walk}\label{prop:decomposing_simplicial_walk}
    Every graph has a path decomposition of width at most $k$ if and only if it has a $k$-simplicial walk that is decomposing.
\end{repproposition}

\subsection{Labeled and quantum graphs}
We follow the algebraic approach by Lov\'{a}sz and Szegedy~\cite{Lov+09}. 
A~\emph{$k$-labeled graph}, 
denoted as $F^\vecu$, is a graph $F$ with a distinguished $k$-tuple of \emph{labels} $\vecu \In V(F)^k.$
Given two labeled graphs $F^\vecu$ and $G^\vecv$,
a \emph{(labeled, graph) homomorphism} $\varphi\colon F^\vecu \to G^\vecv$
that is homomorphism of graphs $F \to G$ that additionally respects labels, that is,
$\varphi(u_i) = v_i$ for each $i \In [k].$
The homomorphism count $F^\vecu$ to $G^\vecv$ is denoted by $\hom(F^\vecu, G^\vecv).$

A \emph{quantum $k$-labeled graph} is a formal finite \emph{$\Real$-linear combination} of $k$-labeled graphs.
For a quantum $k$-labeled graph $F^\vecu = \sum_{i=1}^m c_i F_i^{\vecu_i}$, 
every coefficient $c_i$ is in $\Real$ and $F_i^{\vecu_i}$ is $k$-labeled graph we call a \emph{constituent of} $F^\vecu.$
Homomorphism counts extend linearly from constituents to $F^\vecu$ as follows
\begin{align*}
    \hombig{\sum_{i=1}^m c_i F_i^{\vecu_i}, G^\vecv} = \sum_{i=1}^m c_i \cdot \hombig{F_i^{\vecu_i}, G^\vecv}.
\end{align*}

For (standard) $k$-labeled graphs, $G^\vecu, H^\vecv$, we define their product as 
$G^\vecu\cdot H^\vecv$ as their disjoint union 
$G \sqcup H$ where we additionally identified vertices $u_i \In V(G)$ and $v_i \In V(H)$ for each $i \In [k].$
Note that the product can introduce self-looping edges and multiedges.
If multiedge occurs, one can remove this duplication without a change of homomorphism count.
We have the following identity:
\begin{align*}
    \hombig{K^\vecu \cdot F^\vecv,G^\vecw} = \hombig{K^\vecu,G^\vecw} \cdot \hombig{F^\vecv,G^\vecw},
\end{align*}
for any $k$-labeled graph $K^\vecu$ and any $k$-labeled graphs $F^\vecv, G^\vecw.$ 
This identity also naturally extends to quantum $k$-labeled graphs.

For the definition pebble forest covers, we refer to \cite{Abr+17, Rat+23} or \cref{app:pebble_forest_covers}. 
Denote the class of all $\ell$-labeled graphs such that $\ell \le k$ with a \emph{$k$-pebble forest cover of depth $d$} by $\mathcal{LT}_{k}^d.$
We next recall an important characterization of $\mathcal{LT}_{k}^d.$

\begin{theorem}[\cite{Dvo+10, Abr+17, Rat+23}]\label{thm:wl_homomorphism_counts}
Let $k\ge 1$, $d \ge 0$ be integers, and $G^\vecu$ and $H^\vecv$ be two $k$-labeled graphs, then 
\[\chiwl{k}{d}(G, \vec u) = \chiwl{k}{d}(H, \vec v) 
\quad\text{if and only if}\quad
\hom(\mathcal{LT}^d_k, G^\vecu) = \hom(\mathcal{LT}^d_k, H^\vecv).\]
\end{theorem}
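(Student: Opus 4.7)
The plan is to prove both directions by induction on $d$, following the labeled-graph extension of Dvo\v r\'ak's argument~\cite{Dvo+10} streamlined through the pebble-forest formalism of~\cite{Abr+17, Rat+23}. In the base case $d=0$, a labeled graph in $\mathcal{LT}^0_k$ has no non-label vertices, and a homomorphism from it into $G^\vecu$ reduces to checking that the label assignment respects adjacencies and identifications; therefore $\hom(\mathcal{LT}^0_k, G^\vecu)$ records exactly $\atp_k(G,\vecu)=\chiwl{k}{0}(G,\vecu)$, and vice versa.

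For the inductive step, the key observation is that a $k$-pebble forest cover of depth $d+1$ factors a labeled graph at the root $z$ of each pebble tree: $z$ takes over some pebble position $p\In[k]$, and each subtree below $z$ is a labeled graph in $\mathcal{LT}^d_k$ with $z$ occupying position $p$. Hence every $T^\vecw\In \mathcal{LT}^{d+1}_k$ decomposes as a product $T^\vecw = \prod_i T_i^{\vecw[p_i\mapsto z]}$, and multiplicativity of homomorphism counts yields
\[
    \hombig{T^\vecw,\, G^\vecu} \;=\; \sum_{v\In V(G)} \prod_i \hombig{T_i^{\vecw[p_i\mapsto z]},\, G^{\vecu[p_i\mapsto v]}},
\]
which exactly mirrors the aggregation carried out by $\xi^\tup d$ in the WL step. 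The $(\Rightarrow)$ direction follows: equality of $\chiwl{k}{d+1}(G,\vecu)$ and $\chiwl{k}{d+1}(H,\vecv)$ forces the multisets $\xi^\tup d(G,\vecu)$ and $\xi^\tup d(H,\vecv)$ to agree, so by the inductive hypothesis a matching of factors $v\leftrightarrow v'$ makes the sums coincide for every $T^\vecw\In\mathcal{LT}^{d+1}_k$.

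For $(\Leftarrow)$, I plan a sifting construction. The inductive hypothesis says $\hom(\mathcal{LT}^d_k, -)$ already distinguishes depth-$d$ WL classes, so by Gaussian elimination over $\Real$ one can assemble, for each realizable depth-$d$ color $c$, a quantum labeled graph $\Phi_c^\vecw\In \Real\cdot\mathcal{LT}^d_k$ with the indicator property $\hombig{\Phi_c^\vecw, G^\vecu} = \ind{\chiwl{k}{d}(G,\vecu)=c}$. Products $\prod_p \Phi_{c_p}^{\vecw[p\mapsto z]}$, combined with small labeled gadgets selecting a chosen value of $\atp_{k+1}(G,\vecu v)$, then produce quantum labeled graphs in $\Real\cdot \mathcal{LT}^{d+1}_k$ whose homomorphism counts into $G^\vecu$ record, for each target entry of $\xi^\tup d(G,\vecu)$, its multiplicity. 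Equality of these counts on $(G,\vecu)$ and $(H,\vecv)$ yields equality of the multisets, and therefore of $\chiwl{k}{d+1}$.

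The main obstacle is aligning the labeled-graph product (which identifies vertices sharing a common label) with the WL step (which varies a single vertex $v$ across all pebble positions $p$ simultaneously); the pebble-forest discipline is exactly what makes this alignment work, since passing a pebble from $w_p$ to a fresh vertex $z$ corresponds to one WL round at coordinate $p$, and the cover's depth bound controls the total number of rounds performed.
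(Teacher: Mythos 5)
First, note that the paper does not prove \cref{thm:wl_homomorphism_counts} at all: it is imported as a known result from \cite{Dvo+10, Abr+17, Rat+23}, and the only ingredient the paper actually develops around it is \cref{lem:wl_to_quantum_graph}, which is obtained by going through $\foc$-formulas with $d$ quantifiers (via \cite{Gro+17}) and then converting formulas to quantum labeled graphs (via \cite{Dvo+10, Rat+23}). Your plan -- a direct induction on $d$ matching one level of the pebble forest cover to one WL round, plus an interpolation argument producing indicator quantum graphs -- is essentially the route taken in the cited works, so in spirit you are reproving the background theorem rather than diverging from the paper.

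That said, the sketch has a genuine gap exactly at the point you flag as ``the main obstacle.'' The WL update $\xi^\tup{d}(G,\vecu)$ records, for a \emph{single} witness $v$, the joint tuple $\big(\atp_{k+1}(G,\vecu v),\,\chiwl{k}{d}(G,\vecu[p\mapsto v]) \mid p\In[k]\big)$, whereas a labeled graph in $\mathcal{LT}^{d+1}_k$ contributes one fresh unlabeled root $z$ carrying one pebble position $p$. Your displayed identity conflates these: with several subtrees $T_i$ at distinct positions $p_i$ you do not get a single sum over $v$, because the labeled product glues only at labeled vertices, so the fresh roots of different factors are \emph{not} identified and the expression factors into independent sums instead. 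Consequently, in the $(\Leftarrow)$ direction it is not enough to multiply per-position indicators $\Phi_{c_p}^{\vecw[p\mapsto z]}$: nothing forces their roots to map to the same $v$, which is precisely the joint information you need to recover the multiset $\xi^\tup{d}$. The cited proofs handle this either through the logic intermediary (a color class is definable by one $\foc$ formula with $d$ quantifiers, and the formula-to-quantum-graph translation does the bookkeeping) or by a more careful gadget analysis; your appeal to ``the pebble-forest discipline is exactly what makes this alignment work'' asserts the conclusion rather than proving it. Two further points need attention: the indicator construction only exists after restricting to graphs on at most $n$ vertices (the paper's \cref{lem:wl_to_quantum_graph} fixes $n$; over all graphs no finite quantum graph is an indicator), and your ``gadget selecting $\atp_{k+1}(G,\vecu v)$'' must be shown to live in $\mathcal{LT}^{d+1}_k$ -- the pebbling condition forbids an edge between $z$ and the label whose pebble $z$ reuses, so that part of the atomic type has to be recovered indirectly from factors at other positions, which again runs into the root-identification issue above. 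The forward direction and the $d=0$ base case are fine in outline.
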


We use a known result that is stronger than the forward \rev{implication}{part} of \cref{thm:wl_homomorphism_counts}.
\begin{lemma}\label{lem:wl_to_quantum_graph}
Let $k\ge 1$, $d \ge 0$, $n \ge 1$ be integers, 
and let $c$ be a color in $\im(\chiwl{k}{d}).$
Then there exists a quantum $k$-labeled graph $F^\vecu$ with constituents in $\mathcal{LT}^d_k$
such that for every $k$-labeled graph $G^\vecv$ on $n$ vertices it holds 
\begin{equation}
    \hom(F^\vecu, G^\vecv) \text{ equals } 1 \text{ if } c = \chiwl{k}{d}(G, \vecv), \An 0  \text{ otherwise. }
    \label{eq:quantum_labeled_model}
\end{equation}
\end{lemma}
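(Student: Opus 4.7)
The plan is to treat the lemma as a finite Stone--Weierstrass statement for the algebra of homomorphism count functions generated by $\mathcal{LT}^d_k$.

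First, I would fix $n$ and restrict attention to the finite set $\mathcal{G}_n$ of isomorphism classes of $k$-labeled graphs on $n$ vertices. On this finite set, $\chiwl{k}{d}$ takes only finitely many values, forming a finite subset $C_n$ of $\im(\chiwl{k}{d})$; I may assume $c \in C_n$, otherwise the zero quantum graph suffices.

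Second, I would let $\mathcal{A} \subseteq \Real^{\mathcal{G}_n}$ be the set of functions $G^\vecv \mapsto \hom(Q^\vecu, G^\vecv)$ as $Q^\vecu$ ranges over quantum $k$-labeled graphs with constituents in $\mathcal{LT}^d_k$. By $\Real$-linearity of quantum graphs, $\mathcal{A}$ is a real vector space. The multiplicativity identity stated in the text gives $\hom(K^\vecu \cdot F^\vecu, G^\vecv) = \hom(K^\vecu, G^\vecv)\, \hom(F^\vecu, G^\vecv)$, so provided $\mathcal{LT}^d_k$ is closed under the $k$-labeled product, $\mathcal{A}$ is closed under pointwise multiplication. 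The constant function $1$ is realized by taking $K$ to be $k$ isolated labeled vertices with no edges, which lies in $\mathcal{LT}^d_k$ and admits exactly one labeled homomorphism into every $G^\vecv$. Hence $\mathcal{A}$ is a unital $\Real$-subalgebra of $\Real^{\mathcal{G}_n}$.

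Third, by the forward implication of \cref{thm:wl_homomorphism_counts} applied to each pair of graphs in $\mathcal{G}_n$ whose WL-colors differ, $\mathcal{A}$ separates WL-color classes. Since every element of $\mathcal{A}$ is constant on WL-color classes, $\mathcal{A}$ descends to a point-separating unital subalgebra of $\Real^{C_n}$. By the finite Stone--Weierstrass principle --- for each $c' \in C_n \setminus \{c\}$, pick a hom count function separating $c$ from $c'$, affinely normalize it to take value $1$ at $c$ and $0$ at $c'$, and multiply these normalizations over all such $c'$ --- the indicator function of $\{c\}$ lies in $\mathcal{A}$. Expanding this element as an explicit quantum $k$-labeled graph $F^\vecu$ with constituents in $\mathcal{LT}^d_k$ completes the construction.

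The main obstacle will be the closure of $\mathcal{LT}^d_k$ under the $k$-labeled product, on which the algebra structure of $\mathcal{A}$ rests. The natural construction glues two $k$-pebble forest covers of depth at most $d$ along the shared labeled vertices: since the labels in both factors carry the same $k$ pebbles, the merged forest can be re-rooted at the identified pebbled vertices, preserving depth $\le d$ and keeping the pebble count at $\le k$. This closure is implicit in the forward implication of \cref{thm:wl_homomorphism_counts} and in the references \cite{Dvo+10, Abr+17, Rat+23}, but a careful formal verification against the pebble forest cover definition is the routine-but-nontrivial work that underpins the whole argument.
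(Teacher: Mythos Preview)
Your approach is sound and takes a genuinely different route from the paper. The paper proceeds through logic: it first invokes Grohe (also Cai--F\"urer--Immerman) to produce, for each colour $c$, a $\foc$-formula $\varphi_c$ of quantifier depth at most $d$ defining it on graphs of size $n$, and then applies a lemma of Dvo\v{r}\'ak (and Rattan) converting any such formula into a quantum $k$-labeled graph with constituents in $\mathcal{LT}^d_k$ having the indicator property. Your argument bypasses the logical intermediary altogether, using instead both directions of \cref{thm:wl_homomorphism_counts} together with closure of $\mathcal{LT}^d_k$ under the $k$-labeled product to run a finite Stone--Weierstrass. This is conceptually cleaner and more self-contained, though it consumes the full biconditional of \cref{thm:wl_homomorphism_counts} as a black box; the paper's cited references essentially re-derive the needed pieces by induction on formula structure, with the product-closure step hidden inside the treatment of conjunction.

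One small correction: in your third paragraph, separation of colour classes is the contrapositive of the \emph{backward} implication of \cref{thm:wl_homomorphism_counts} (distinct colours imply some $F\in\mathcal{LT}^d_k$ with distinct hom counts), not the forward one. The forward implication is what justifies your next sentence, that every function in $\mathcal{A}$ is constant on colour classes. With the labels swapped the argument goes through; the product closure you flag as the main obstacle is indeed standard, and your sketch of gluing the two pebble forest covers along the common labeled chain is the correct construction.
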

\begin{proof}
We fix $n$ and a color $c$ in the image of $\chiwl{k}{d}.$ 
Then by {\cite[Claim~2, Theorem~5.5.3]{Gro+17}} (see also \cite{Imm+90, Cai+92}), 
there exists a formula $\varphi_c(x_1, \dots, x_k)$  with at most $d$ quantifiers in $\foc$ such that for all graphs $G$ on at most $n$ vertices 
\begin{align*}
 c = \chiwl{k}{d}(G, \vecv) 
\quad\text{ if and only if }\quad
(G, \vecv) \models \varphi_c(x_1, \dots, x_k).
\end{align*}
Finally, by {\cite[Lemma~6]{Dvo+10}, \cite[Lemma A.5]{Rat+23}}, there exists a quantum $k$-labeled graph $F^\vecu$ with constituents in $\mathcal{LT}^d_k$ such that 
for all $k$-labeled graphs $G^\vecv$ on $n$ vertices holds
if $(G, \vecv) \models \varphi_c(x_1, \dots, x_k)$ then $\hom(F^\vecu, G^\vecv) = 1$, and otherwise $\hom(F^\vecu, G^\vecv) = 0.$
\qed\end{proof}

\subsection{Caterpillar decomposition and simplicial walks}
In this final part, we introduce caterpillar decomposition, situate it within the context of known results, and prove \cref{thm:main_equivalence}.
\begin{definition}\label{def:cd}
    Let $F$ be a graph. A \emph{caterpillar decomposition} (CD) of $F$ 
    is a sequence $\vec D  =\left((\beta_i, L_i) \mid i \In [t]\right)$, where $L_i$ is an~induced subgraph of~$F$ and $\beta_i \subseteq V(L_i)$  for each $i \In [t]$ such that
    \begin{enumerate}
        \item[(C1)] Every vertex $u\In V(L_i)\setminus \beta_i$ has only neighbors in $L_i.$
        \item[(C2)] Simplicial walk $\vecbeta$ is \emph{decomposing} in $S = F\left[\beta_1 \cup \beta_2 \cup \dots \cup \beta_t\right].$
        \item[(C3)] Every labeled graph $L_i^{\stup \vecbeta i}$ is in $\mathcal{LT}_{k}^{d_i}$ for some $d_i$ or in $\mathcal{LT}_{k+1}^{0}$ and $d_i = 0.$
    \end{enumerate}
    We call the induced subgraph $S$ the \emph{spine} and the subgraphs $L_i$ the \emph{legs} of $\vec D.$
    A \emph{width} of $\vec D$ is the minimal $k$ such that $\vecbeta$ is a $k$-simplicial walk, and $t$ is a \emph{length} of  $\vec D$. 
    A \emph{height} of $\vec D$ is $\max_{i \In [t]} (|\beta_i| + d_i - k)$.

    We denote the graph class having CD of width at most $k$ and height at most~$h$ by $\mathcal{P}_{k,h},$
    and by $\mathcal{P}_{k,h}^t\subseteq \mathcal{P}_{k,h}$ its subclass having such CD of length $t+1.$
\end{definition}

\begin{repproposition}{cd_path_decomposition}\label{prop:cd_path_decomposition}
    For $k\ge 1$, $\mathcal{P}_{k,1}$ is the class of graphs of pathwidth at most~$k,$
    and for $h\ge 1$, the graph class $\mathcal{P}_{k,h}$ is a subclass of $\mathcal{P}_{k+h-1,1}.$
\end{repproposition}

It follows from \cref{lem:wl_to_quantum_graph} that the number of homomorphisms from any $k$-labeled $L^\vecu \In \mathcal{LT}_k^d$
to a $k$-labeled graph $G^\vecv$ is determined exactly by the color $c = \chiwl{k}{d}(G,\vecv).$
The following definition applies this fact.
\begin{definition}\label{def:legformula}
    Let $F$ be a graph and $\vec D = ((\beta_i, L_i) \mid i \In [t])$ be its caterpillar decomposition,
    let $G$ be a graph and $\vecc$ be colored $k$-simplicial walk,
    such that for each $i \In [t]$ it holds
    $\vecc(i) \In \im(\chi_{k,h}(G, -))\text{ implies }L_i^{\stup \vecbeta i} \In \mathcal{LT}_k^{d_i}\cup \mathcal{LT}_{k+1}^{0}.$
    Then we write 
    $\hom\big((F, \vec D), (G, \vecc)\big) \coloneq \prod_{i=1}^{t} \hombig{L_i^{\stup \vecbeta i}\!,\, \vecc(i)}.$
\end{definition}

\begin{lemma}\label{lem:homformula}
    Let $F\In \mathcal{P}_{k, h}^t$ be a graph with a caterpillar decomposition $\vec D$ of width at most $k$ and length at most $t$,
    and let $G$ be any graph. Then it holds
    \begin{align*}
        \hom(F, G) = 
        \sum_{\vecc}\mu(G, \vecc) \cdot
        \hombig{(F, \vec D), (G, \vecc)},
    \end{align*}
    where the sum ranges over elements of $\Sw {(k,h)} t (G)$, and $\mu(G, \vecc)$ denotes the multiplicity of $\vecc$ in $\Sw {(k,h)} t (G).$
\end{lemma}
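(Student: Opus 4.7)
The plan is to parametrize each homomorphism $\varphi: F \to G$ by its restriction to the spine $S$ and its independent extensions to each leg, then reorganize the resulting sum by color words along the spine. First, by (C1), every non-spine vertex of a leg $L_i$ has all its neighbors inside $L_i$; hence $V(F) = V(S) \cup \bigsqcup_i (V(L_i) \setminus \beta_i)$ and a homomorphism $\varphi: F \to G$ decomposes uniquely as a homomorphism $\varphi|_S: S \to G$ together with $t$ independent extensions, one per leg. The number of extensions on $L_i$ given $\varphi|_{\beta_i}$ is exactly $\hom\!\left(L_i^{\stup{\vecbeta}{i}},\,G^{\varphi(\stup{\vecbeta}{i})}\right)$, by definition of the labeled homomorphism count.

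Second, I will associate to each $\varphi|_S$ the sequence $\vecsig$ of tuples $\stup{\vecsig}{i} = \varphi(\stup{\vecbeta}{i}) \in V(G)^{\le k+1}$ for $i \in [t]$. Properties (D1) and (D2) of the decomposing simplicial walk $\vecbeta$ ensure that $\varphi|_S$ is recovered from $\vecsig$: by (D1) every vertex of $V(S)$ appears as an incoming element exactly once along $\vecbeta$ and is assigned its image at that step, and by (D2) every edge of $S$ sits inside some $\beta_i$ and is thus mapped to an edge of $G$. Consecutive tuples differ by one appended or removed coordinate (possibly coinciding with a coordinate already present when $\varphi$ is non-injective); this matches precisely the $\xi_+^{(t)}$ / $\xi_-^{(t)}$ steps of Definition~\ref{def:sw}, so the $\vecsig$ that arise are in bijection with the tuple sequences counted by $\chisw{(k,h)}{t}$ started from any $1$-tuple. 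Grouping by the induced color word $\vecc$ with $\vecc(i) := \chi_{k,h}(G,\stup{\vecsig}{i})$ introduces a factor $\mu(G,\vecc)$ equal to the multiplicity of $\vecc$ in $\Sw{(k,h)}{t}(G)$.

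Third, it remains to show that each leg count $\hom(L_i^{\stup{\vecbeta}{i}},\,G^{\stup{\vecsig}{i}})$ depends only on $\vecc(i)$. When $|\beta_i| = \ell \le k$, condition (C3) together with the height bound $|\beta_i| + d_i - k \le h$ gives $d_i \le k+h-\ell$, so $\chi_{k,h}$ (which on $\ell$-tuples is $\chiwl{k}{k+h-\ell}$ applied to the padded $k$-tuple) refines $\chiwl{k}{d_i}$; Lemma~\ref{lem:wl_to_quantum_graph} then forces homomorphism counts from $L_i^{\stup{\vecbeta}{i}} \in \mathcal{LT}_k^{d_i}$ to be constant on $\chi_{k,h}$-classes. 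When $\ell = k+1$, $L_i^{\stup{\vecbeta}{i}} \in \mathcal{LT}_{k+1}^0$ is an atomic-type constituent, whose homomorphism count is determined by $\atp_{k+1} = \chi_{k,h}$. In both cases, the shared value agrees with $\hom(L_i^{\stup{\vecbeta}{i}},\vecc(i))$ from Definition~\ref{def:legformula}, and weighting the product $\prod_i \hom(L_i^{\stup{\vecbeta}{i}},\vecc(i))$ by $\mu(G,\vecc)$ yields the claimed identity.

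The main obstacle I expect is the bijection step: a non-injective $\varphi|_S$ produces a $\vecsig$ whose tuples have repeated entries, so $\vecsig$ is not a simplicial walk in the strict set-theoretic sense of Definition~\ref{def:simplicial_complex_walk}. The cleanest way around this is to work directly with the tuple-based formulation of $\chisw{(k,h)}{t}$ in Definition~\ref{def:sw}, which already admits $V^{\le k+1}$-tuples with possible repetitions, so that the correspondence between spine homomorphisms and elements of $\Sw{(k,h)}{t}(G)$ becomes strictly bijective without case analysis.
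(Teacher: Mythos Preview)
Your proposal is correct and follows essentially the same approach as the paper: restrict a homomorphism to the spine, count extensions to the legs independently via (C1), and group spine maps by their induced color word. The paper's proof is extremely terse (four sentences), whereas you make explicit several points it glosses over, notably the height bound $|\beta_i|+d_i\le k+h$ that guarantees each leg's labeled homomorphism count is determined by $\chi_{k,h}$, and the observation that non-injective spine maps produce tuples with repetitions, which is exactly why one works with the tuple-based $\chisw{(k,h)}{t}$ of Definition~\ref{def:sw} rather than the set-theoretic simplicial walks. One minor remark: the fact that $\hom(L_i^{\stup{\vecbeta}{i}},G^{\vecv})$ depends only on the WL colour of $\vecv$ is more directly a consequence of Theorem~\ref{thm:wl_homomorphism_counts} than of Lemma~\ref{lem:wl_to_quantum_graph}, though the paper itself attributes it to the latter in the paragraph preceding Definition~\ref{def:legformula}.
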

\begin{proof}
Let $\vec D = ((\beta_i, L_i) \mid i \In [t])$ with spine $S$.
Consider $\psi\colon V(S) \to V(G)$ and define $\vecc$ by $\vecc(i) = \chi_{k,h}(G, \psi(\stup{\vecbeta}{i}))$ for each $i \In [t].$
Apply $\psi$ element-wise to a $\ell$-tuple of vertices, that is $\psi((u_1, \ldots, u_\ell)) = (\psi(u_1), \ldots, \psi(u_\ell)).$
In graph $G$, the multiplicity $\mu(G, \vecc)$ counts exactly the number of images of $\psi$ in $G.$
Since the legs of CD are disjoint outside the spine that is fixed by $\psi$, the number of homomorphisms from $F$ to $G$ extending $\psi$ is given by the product of homomorphism counts from each leg independently, as given in \cref{def:legformula}.
\qed\end{proof}

\begin{lemma}[{\cite[Lemma~B.1]{Cer+25}}]\label{lem:catgnn_faithful_product}
    For every integer $t\ge 0$ and natural numbers
    $D_0, D_1, D_2, \dots, D_t$
    there are integral exponents $s_1, s_2$ $\dots$, $s_t$
    such that $D_0 \le 2^{s_1}$,
    and such that
    all $t$-tuples of natural numbers $(d_1, d_2, \dots, d_t)$
    that satisfy $1 \le d_i < D_i$ for $i \In [t]$,
    are injectively represented by the product
    $p = d_1^{s_1} \cdot d_2^{s_2} \cdots \cdot d_t^{s_t}.$
\end{lemma}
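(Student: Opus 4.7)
The plan is a prime-factorization encoding: for each prime $q$, the $q$-adic valuation of the product splits as $v_q(p) = \sum_{i=1}^{t} s_i\, v_q(d_i)$, so it suffices to pick the $s_i$ so that, for every $q$, this sum is a mixed-radix expansion whose digits are exactly the $v_q(d_i)$.

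Concretely, I would set $E_i := \max(\lceil \log_2 D_i \rceil,\, 1)$ for $i = 0, 1, \ldots, t-1$, take $s_1 := E_0$, and recursively define $s_{i+1} := s_i \cdot E_i$ for $i \ge 1$. The inequality $D_0 \le 2^{s_1}$ is then immediate from $s_1 \ge \log_2 D_0$. Fix any tuple $(d_1, \ldots, d_t)$ with $1 \le d_i < D_i$; the bound $d_i < 2^{E_i}$ gives $v_q(d_i) < E_i$ for every prime $q \ge 2$, so $0 \le s_i\, v_q(d_i) < s_i E_i = s_{i+1}$.

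Dividing through by $s_1$ (which divides every $s_i$, hence $v_q(p)$) yields
\[
  \frac{v_q(p)}{s_1} \;=\; v_q(d_1) + E_1\, v_q(d_2) + E_1 E_2\, v_q(d_3) + \cdots + (E_1 E_2 \cdots E_{t-1})\, v_q(d_t),
\]
the unique mixed-radix expansion of $v_q(p)/s_1$ with digit $v_q(d_i)$ of radix $E_i$ in position $i$. Reading off successive digits recovers each $v_q(d_i)$, and repeating over all primes $q$ reconstructs every $d_i$ from its prime factorization, yielding injectivity.

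The only delicate point, and the reason a single sequence $(s_i)$ works across all primes at once, is that the radix bound $v_q(d_i) < E_i$ must be chosen to hold uniformly in $q$. Using base-$2$ logarithms throughout provides a common upper bound on all $q$-adic valuations simultaneously, so no per-prime adjustment is needed and the construction is injective globally; the remaining verifications (divisibility of $v_q(p)$ by $s_1$, integrality of the $s_i$, and the degenerate cases $D_i = 1$) are routine.
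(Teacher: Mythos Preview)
The paper does not prove this lemma itself; it is imported from \cite[Lemma~B.1]{Cer+25} and stated without argument, so there is no in-paper proof to compare against.

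Your prime-valuation mixed-radix argument is correct and complete. With $E_j \ge \lceil \log_2 D_j \rceil$ and $s_i = E_0 E_1 \cdots E_{i-1}$, the bound $d_i < 2^{E_i}$ forces $v_q(d_i) < E_i$ for every prime $q$, so $v_q(p)/s_1 = v_q(d_1) + E_1 v_q(d_2) + \cdots + (E_1\cdots E_{t-1})\,v_q(d_t)$ is a genuine mixed-radix expansion whose digits are recovered uniquely; reassembling over all primes reconstructs each $d_i$. The only cosmetic point is that you speak of ``radix $E_i$ in position $i$'' while $E_t$ is never defined, but the most significant digit in a mixed-radix expansion needs no upper bound for uniqueness, so nothing is missing.
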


\begin{lemma}[{\cite[Lemma~B.2]{Cer+25}}]\label{lem:catgnn_distinct_exponents}
    Let $\vec x$ be a $m$-tuple in $\Nat^m_{\ge 1}$ with mutually distinct elements,
    that is $\vec x_i \neq\vec x_j$ whenever $i\neq j$, 
    and let $m$-tuples  $\vec a$, $\vec b$ in $\Nat^m$ be such that 
    $\vec a \neq \vec b$, that is $a_i \neq b_i$ for existing $i\In[m].$
    Then there is $k$ in $\Nat$ such that
    \begin{align}
        \sum_{i=1}^{m} a_i x_i^k \neq \sum_{i=1}^{m} b_i x_i^k.
        \label{eq:distinct}
    \end{align}
\end{lemma}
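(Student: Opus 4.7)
My plan is to reduce the statement to non-vanishing of the integer-valued function $k \mapsto \sum_{i=1}^m c_i x_i^k$, where $c_i \coloneqq a_i - b_i.$ Since $\vec a \neq \vec b$, the coefficient vector $\vec c$ is non-zero, and the desired inequality $\sum_{i=1}^m a_i x_i^k \neq \sum_{i=1}^m b_i x_i^k$ is equivalent to the existence of some $k \In \Nat$ with $\sum_{i=1}^m c_i x_i^k \neq 0.$

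The key tool will be the Vandermonde matrix associated with the distinct values $x_1, \ldots, x_m.$ Concretely, I would consider $V \In \Rat^{m\times m}$ with entries $V_{k,i} = x_i^k$ for $k \In \{0,1,\ldots,m-1\}$ and $i\In [m].$ Its determinant equals $\prod_{i<j}(x_j - x_i)$, which is non-zero by the distinctness of the $x_i$, so $V$ is invertible over $\Rat.$ Assuming for contradiction that $\sum_{i=1}^m c_i x_i^k = 0$ for every $k \In \{0, 1, \ldots, m-1\}$, the vector identity $V \vec c = 0$ would force $\vec c = 0$, contradicting $\vec a \neq \vec b.$ Therefore at least one exponent $k \In \{0, 1, \ldots, m-1\}$ provides the required witness, and in fact this bound $k < m$ is stronger than what the statement asks for.

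I do not anticipate any real obstacle: the argument is essentially a textbook Vandermonde non-vanishing. The only minor stylistic point worth flagging is that the statement permits $k = 0$, in which case the witness simply compares the bare sums $\sum_i a_i$ and $\sum_i b_i$; should a strictly positive $k$ be preferred, shifting the row index by one multiplies $\det V$ by $\prod_i x_i$, which remains non-zero since every $x_i \ge 1$, so the same proof goes through with $k \In \{1, \ldots, m\}.$
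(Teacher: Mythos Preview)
Your proof is correct. The paper does not supply its own proof of this lemma; it is quoted verbatim from \cite[Lemma~B.2]{Cer+25}, so there is nothing to compare against beyond noting that your Vandermonde argument is the standard and entirely adequate one, and your remark handling the case where $0\notin\Nat$ is a nice touch.
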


\begin{theorem}\label{thm:walks}
    Let $k, h \ge 1, t \ge 0$ be integers, and $G$ and $H$ be graphs, then
    $\Sw {(k,h)} t (G) = \Sw {(k,h)} t (H) \Iff\hom(\mathcal{P}_{k,h}^t, G) = \hom(\mathcal{P}_{k,h}^t, H).$
\end{theorem}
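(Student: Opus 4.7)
The plan is to prove both directions separately, using \cref{lem:homformula} as the central bridge. For $(\Leftarrow)$: assume the multiset equality $\Sw{(k,h)}t(G) = \Sw{(k,h)}t(H)$. For any $F \In \mathcal{P}_{k,h}^t$ with caterpillar decomposition $\vec D$, apply \cref{lem:homformula} to both $G$ and $H$. By \cref{def:legformula}, the weight $\hom((F,\vec D),(X,\vecc))$ factors as $\prod_i \hom(L_i^{\stup\vecbeta i}, \vecc(i))$; by condition (C3) of \cref{def:cd} each labeled leg lies in $\mathcal{LT}_k^{d_i} \cup \mathcal{LT}_{k+1}^0$, so by \cref{thm:wl_homomorphism_counts} these factors depend only on the abstract color $\vecc(i)$, not on $X$. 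The multiplicities $\mu(G,\vecc) = \mu(H,\vecc)$ coincide by hypothesis, so both sums coincide and $\hom(F,G) = \hom(F,H)$.

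For $(\Rightarrow)$: assume $\hom(F, G) = \hom(F, H)$ for every $F \In \mathcal{P}_{k,h}^t$, which extends linearly to all quantum $\Real$-combinations with constituents in $\mathcal{P}_{k,h}^t$. To obtain multiset equality, it suffices to establish $\mu(G, \vecc^*) = \mu(H, \vecc^*)$ for every color sequence $\vecc^*$ of length $t+1$. Fix $n = \max(|V(G)|,|V(H)|)$ and such a $\vecc^*$. The plan is to construct a quantum graph $F_{\vecc^*}$ with constituents in $\mathcal{P}_{k,h}^t$ satisfying $\hom(F_{\vecc^*}, X) = \mu(X, \vecc^*)$ for every graph $X$ on at most $n$ vertices; the hypothesis then directly yields $\mu(G, \vecc^*) = \mu(H, \vecc^*)$.

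To build $F_{\vecc^*}$, pick any abstract walk shape $\sigma^*$ compatible with $\vecc^*$ (matching simplex sizes $\ell_i = |\beta_i|$ and atomic types encoded in $\vecc^*$); this fixes a spine $S_{\sigma^*}$ together with a decomposing walk $\vecbeta$. At each position $i$, the target color $\vecc^*(i)$ lies in $\im(\chiwl{k}{k+h-\ell_i})$ if $\ell_i \le k$, or in $\im(\atp_{k+1})$ if $\ell_i = k+1$. By \cref{lem:wl_to_quantum_graph} (after identifying the padded labels in the case $\ell_i < k$) or, respectively, by a direct atomic-type construction, there is a quantum $\ell_i$-labeled graph $L^*_i$ with constituents in $\mathcal{LT}_k^{k+h-\ell_i}$ or $\mathcal{LT}_{k+1}^0$ such that $\hom(L^*_i, X^\vecv) = \ind{\chi_{k,h}(X,\vecv) = \vecc^*(i)}$ for every $X$ on at most $n$ vertices. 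Attaching each $L^*_i$ as the $i$-th leg to $S_{\sigma^*}$ produces $F_{\vecc^*}$; its constituents satisfy (C1)--(C3) of \cref{def:cd} with width $\le k$, height $\le h$ (using $d_i = k+h-\ell_i$), and length $t+1$. Applying \cref{lem:homformula} constituent-wise and the isolating property of the legs then yields $\hom(F_{\vecc^*}, X) = \sum_\vecc \mu(X, \vecc) \prod_i \ind{\vecc(i)=\vecc^*(i)} = \mu(X, \vecc^*)$, as required.

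The main obstacle is the careful verification that the labeled product of the quantum legs with the spine produces a quantum graph whose constituents properly lie in $\mathcal{P}_{k,h}^t$: condition (C1) demands that the constituents of each leg share no vertices with the rest of the graph outside the label set $\beta_i$, and (C3) requires the depth of each leg's pebble cover to match the WL level implicit in the color. An alternative plan avoiding real linear combinations would parameterize each leg as a labeled-graph power $L_i^{m_i}$ and invoke \cref{lem:catgnn_faithful_product} together with \cref{lem:catgnn_distinct_exponents} to separate the multiplicity vectors by varying the exponents $m_i$, staying entirely within the graph class $\mathcal{P}_{k,h}^t$.
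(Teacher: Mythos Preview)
Your proposal is correct. The forward direction (multiset equality implies homomorphism equality) matches the paper's use of \cref{lem:homformula}. For the backward direction you take a more direct route: for each target color word $\vecc^*$ you build an indicator quantum graph $F_{\vecc^*}$, using \cref{lem:wl_to_quantum_graph} at each leg, so that $\hom(F_{\vecc^*},X)=\mu(X,\vecc^*)$; linearity of the hypothesis over $\mathcal{P}_{k,h}^t$ then gives $\mu(G,\vecc^*)=\mu(H,\vecc^*)$ immediately. The paper instead works by contraposition and encodes \emph{all} color words simultaneously: it scales the indicator legs to $L_{e_i}=(i+1)\cdot K_{e_i}$, then invokes \cref{lem:catgnn_faithful_product} and \cref{lem:catgnn_distinct_exponents} to choose exponents $s_1,\dots,s_t$ and a power so that the single quantum graph $\bar F$ with legs $(L_i)^{k\cdot s_i}$ has $\hom(\bar F,G)\neq\hom(\bar F,H)$, and finally extracts a distinguishing constituent $F\in\mathcal{P}_{k,h}^t$. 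Your argument is shorter and avoids the two combinatorial lemmas; the paper's argument buys a genuine (non-quantum) witness $F$ in the class, though this is not needed for the statement as written since equality over all of $\mathcal{P}_{k,h}^t$ extends to $\Real$-combinations by linearity anyway. Your ``alternative plan'' in the last sentence is, in fact, exactly the paper's proof.
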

\begin{proof}
    Assume $\Sw {(k,h)} t (G) = \Sw {(k,h)} t (H).$
    By \cref{lem:homformula}, for every $F\In\mathcal{P}_{k,h}^t$ 
    the number $\hom(F, G)$ is determined by the multiplicities in $\Sw {(k,h)} t (G)$.
    Analogously for $H$, we obtain $\hom(\mathcal{P}_{k,h}^t, G) = \hom(\mathcal{P}_{k,h}^t, H).$

    For the backward direction, we assume $\Sw {(k,h)} t (G) \neq \Sw {(k,h)} t (H),$
    and \emph{aim} to find a graph $F\In\mathcal{P}_{k,h}^t$ for which the homomorphism counts differ.
    
    By \cref{lem:wl_to_quantum_graph}, every $h$-color $e$ of $\ell$-tuple in $\im(\chi_{k,h}(G, -))$ is modeled (\cref{eq:quantum_labeled_model}) by a quantum $\ell$-labeled graph $K_e^\vecu$.
    Fix order on $\im(\chi_{k,h}(G, -))=\{e_1, e_2, \ldots, e_C\}$ and let $L_{e_i}^\vecu = (i+1) \cdot K_{e_i}^\vecu$ for each $i \In [C].$
    
    For $n = |V(G)|$ and any $\vecc \In \Sw {(k,h)} t (G)$, bound multiplicity of $\mu(G, \vecc)$ by $D_0 = n^{kt}$, and set $D_1 = D_2 = \ldots = D_t = C+2.$
    By \cref{lem:catgnn_faithful_product}, there are integral exponents $s_1, s_2, \ldots, s_t$ such that $D_0 \le 2^{s_1}$ and for each $d_i < D_i$ there is an injective function given by $(\mu(G, \vecc),\vecc(1), \dots, \vecc(t)) \mapsto$
    \begin{align*}
        \mu(G, \vecc)
        \prod_{i=1}^{t} 
        \left(\hombig{L_{\vecc(i)}^{\stup \vecbeta i}, \vecc(i)}\right)^{s_i}
        =
        \mu(G, \vecc)
        \prod_{i=1}^{t} 
        \hombig{\left(L_{\vecc(i)}^{\stup \vecbeta i}\right)^{s_i}\!,\, \vecc(i)}.
    \end{align*}
    Finally, apply \cref{lem:catgnn_distinct_exponents} by setting $\vec x, \vec a, \vec b \In [C+2]^t$ as 
    \begin{align*}
        \vec x(j) &= \prod_{i=1}^{t} 
        \left(\hombig{L_{\vecc_j(i)}^{\stup \vecbeta i}, \vecc_j(i)}\right)^{s_i},\quad
        \vec a(j) = \mu(G, \vecc_j),\quad
        \vec b(j) = \mu(H, \vecc_j),
    \end{align*}
    where index $j$ ranges over words $\vecc_j \In \Sw {(k,h)} t (G).$
    As a result, we found a sufficiently large $k$, 
    such that quantum graph $\bar{F}$ obtained from CD $\bar{\vec D} = ((\bar{\beta}_i, \bar{L}_i) \mid i \In [t])$, where ${\bar{L}}_{i}^{\stup {\bar{\vecbeta}} i } = \left(L_i^{\stup \vecbeta i}\right)^{k \cdot s_i},$ 
    satisfies $\hom(\bar{F}, G) \neq \hom(\bar{F}, H).$
    Hence, there necessarily exists a constituent $F$ of $\bar{F}$, such that $\hom(F, G) \neq \hom(F, H).$
    By definition of $\bar{\vec D}$, graph $F$ belongs to $\mathcal{P}_{k,h}^t.$ 
\qed\end{proof}

\paragraph{Proof of \cref{thm:main_equivalence}.} Follows from \cref{thm:walks} and \cref{prop:simpl_walks_and_walk_refiment}.

\section{Conclusion}
We introduced simplicial walks as a transparent higher-order analogue of walks. 
These walks, once colored, characterize graph properties recognized by homomorphism counts from classes such as $\mathcal{P}_{k,h}$. 
Building on this, we developed the SW refinement and SW automata, yielding more general and asymptotically faster algorithms for homomorphism-based graph property recognition.
For a subsequent notion of homomorphism indistinguishability and $h=1$ this improves asymptotically upon \cite{Sep+24}.
A key technical contribution is the use of multiplicity involution automata, where forward reduction not only minimizes but also eliminates dependence on the choice of basis.
Moreover, our canonical representations of SW automata are asymptotically smaller than those of multiplicity automata.

Our main theorem establishes a correspondence between numbers of $h$-colored $k$-simplicial walks, homomorphism counts over $\mathcal{P}_{k,h}$, and formulas definable in the restricted-conjunction logic $\rfoc$. 
This extends the classical correspondence linking Weisfeiler--Leman refinement, homomorphisms over $\mathcal{T}_k$, and definability in logic $\foc$. 
The combination of WL colors and simplicial walks, reflected in caterpillar decompositions and in the transition matrices of SW automata, offers a promising framework: it characterizes homomorphism images of path-like structures in generally shorter sequences than homomorphism tensors \cite{Gro+21b}.

A subsequent open question is whether the number of steps of $(k,h)$-SW refinement before it stabilizes can be bounded better than $\OO(n^{k})$.
Known results in \cite{Lic+19, Gro+23} inspire our question by upper-bounding this number for WL refinement.
Answering this would lead towards uncovering the nature of relations between restrictions over which we count homomorphisms, and fixed points of possible induced refinements.

\paragraph{Funding.}
This work was supported by the University of Antwerp (BOF, Doctoral Project 47103).

\paragraph{Acknowledgements.}
The author is grateful to Peter Zeman for guidance on the structure of the manuscript, and to Floris Geerts and Guillermo A. Pérez for helpful discussions on framing this work.

\printbibliography[heading=bibintoc]

\appendix

\section{Proofs of Propositions~\ref{prop:simpl_walks_and_walk_refiment} and~\ref{prop:swa_equivalence}}\label{app:auto_proofs}

\repeatproposition{simpl_walks_and_walk_refiment}
\begin{proof}
    Fix $t\ge 0$ and consider element $\vecw$ in $\Sw{(k,h)}{t}(G).$ 
    By definition, coloring $\chi_{k,h}$ always contains $\atp_\ell$, indicating whenever underlying tuples of $\vecw$ contained a repeated element. Denote this condition by (*).
    If (*) holds then by definition of $\xi^\tup{t}_\pm$ is $\vecw$ a $h$-colored $k$-simplicial walk in $G$ and its multiplicity is exactly the number of its occurrences in $G.$
    More closely, $\xi^\tup{t}_-$ collects tuples of outgoing vertices, and $\xi^\tup{t}_+$ collects tuples of incoming vertices.

    For the other direction, if (*) does not hold, consider incoming vertex $v$ causing duplication in $\vecu v$,
    with corresponding color $\vecw(i) = \chi_{(k,h)}(\vecu v)$ for some $i > 1$ (and \emph{possibly} $v$ is outgoing with color $\vecw(j)$ for some $j>i$). Note that $\vecw(i)$ is determined by $\vecw(i-1)$ (and possibly $\vecw(j)$ is determined by $\vecw(j+1)$),
    hence $\vecw'$ obtained from $\vecw$ by deleting $\vecw(i)$ (and possibly deleting $\vecw(j)$) and deleting vertex $v$ from tuples from $i$ to $t$ (or possibly to $j-1$), is determined by element $\vecw''$ in $\Sw{(k,h)}{t-1}(G)$, (or possibly in $\Sw{(k,h)}{t-2}(G)$), with one less duplication.
    Repeating the above step, we finitely obtain determination by $\hat \vecw$ with no duplications.
\qed\end{proof}

\repeatproposition{swa_equivalence}
\begin{proof}
    Denote the alphabet of $\Ac(G, k, h)$ by  $\Sig = \Sig(G, k, h)$.
    Similarly to (*) in the proof of \cref{prop:simpl_walks_and_walk_refiment}, 
    it suffices focus on color words $\vecc$ in $\Sw {(k,h)} t (G)$ whose colors do not originate from tuples of duplicated vertices.
    Here, we call color \emph{large} if it is a color of $(k+1)$-tuple, corresponding to the case $\chi_{(k,h)} = \atp_{k+1}$ of \cref{def:color_of_simplicial_walk}.
    
    For the backward implication, consider a word $\vecc = c_1 c_2 \dots c_t \In \Sw {(k,h)} {t-1} (G).$
    We split $\vecc$ into maximal parts $c_i\cdots c_j$ that does not contain a large color.
    For each $c_{i}\cdots c_{j}$ of those parts we use the corresponding letter $\Sig^{<\infty}.$
    Since $j-i \neq 0$, consider case $j-i = 1$ and use $(c_1, =, 1, c_1)\In\Sig$ corresponding to multiplicity of $(c_1).$
    Otherwise, $j-i > 1$, then for every $c_ic_\ip$ use  $(c_i, s, c_\ip)\In\Sig$, such that $s \In \{-, +\}$ depends on whenever $c_\ip$ was obtained by incoming or outgoing vertex, ($\xi_-^\tup\ip$ or $\xi_+^\tup\ip$), respectively, as determined from original tuple sizes.
    For each large color $c_{j} c_{j+1} c_{j+2}$, we notice that there appears necessarily incoming vertex 
    $u$ for original tuples of $c_{j}c_{j+1}$ and incoming vertex $v$ for $c_{j+1}c_{j+2}.$
    Since $u$ always incomes as the last one in the tuple we take some $p$ corresponding to the outcome of $v$,
    and use letter $(c_j, \tau, p, c_{j+2})$ where $\tau = \atp_2(u, v)$, so that $\atp_k$ contained in $c_j$ and $c_{j+2}$ together with $p$ and $\tau$ determine the large color in $\im(\atp_{k+1}).$
    
    For the forward implication, firstly, notice that $\sem{\Ac(G, k, h)}$ in addition to $\Sww {(k,h)}(G)$ contains words  $\vecw$ that are not starting with a color of $1$-tuple.
    This does not provide more information as multiplicity of $\vecw$ is the sum of multiplicities of all colored simplicial walks $c_1 c_2 \dots c_\ell \vecw.$
    Secondly, while matrix $D^-$ corresponds to $\xi_-^\tup\ip,$ the matrix $D^+$ symmetrically to $D^-$ but in contrast to $\xi_+^\tup\ip,$ extends each colored walk occurrence by placing incoming vertex $u$ on all positions of the underlying $\ell$-tuple instead of the last one only.
    Given a concrete WL color $c$ in $\im(\chi_{k,h})$, we notice that the count of $\chi_{k,h}(u_1, \dots, u_\ell, v)$ on $Q_k$ as well as the counts of
    $\chi_{k,h}(v, u_1, \dots, u_\ell), \dots, \chi_{k,h}(u_1, \dots, u_{\ell-1}, v, u_\ell)$ are mutually determined.
    Thus, this may only change the multiplicities by a constant factor dependent on $c.$
    Such a property of partition on $V(G)^k$ induced by (WL) coloring is also referred to as being shufflable \cite[Observation 3.1]{Gro+23}.
\qed\end{proof}

\section{Graph decompositions and a proof of Proposition~\ref{prop:decomposing_simplicial_walk}}\label{app:decompositions}

\subsubsection{Decompositions.}
For a given graph $F$, the pair of $(T, \beta)$,
where $T = (V(T), E(T))$ is a tree and $\beta\colon V(T) \to \powset{V(T)}$
is a mapping to \emph{bags} of vertices,
is a~\emph{tree decomposition} of $G$ if it holds:
\begin{enumerate}[label=(\roman*)]
    \item[(T1)] Every $uv \In E(F)$ is in at least one bag, that is,
    there exists $t \In V(T)$ such that $u, v \In \beta(t).$
    \item[(T2)] For each $v \In V(F)$, the induced subgraph $T\left[\{t \mid v \In \beta(t)\}\right]$ is connected and non-empty.
\end{enumerate}
The \emph{width} of a tree decomposition $(T, \beta)$ is defined as decreased size of the largest bag,
$\max_{t \in V(T)} |\beta(t)|-1.$
The \emph{treewidth} of a graph $F$, denoted by $\tw F$, is the minimum $k\In \Nat$ such that 
there exists a tree decomposition of $F$ of width at most $k.$

Moreover, the \emph{pathwidth} of a graph $F$, denoted by $\pw F$, is the minimum $k\In \Nat$ such that there exists a tree decomposition $(P, \beta)$ of $F$ of width at most $k$ such that $P$ is a path, that is, a tree of maximum degree two.
In this case, $(P, \beta)$ is a \emph{path decomposition} of~$F.$
Denote the class of graphs with pathwidth at most $k$ by $\mathcal{P}_{k,1}.$
We call $(P, \beta)$ a \emph{nice} path decomposition if the following holds:
\begin{enumerate}[label=(\roman*)]
    \item[(N1)] For every $st \In E(P)$ it holds $\beta(s) \sqcup \{u\} = \beta(t)$ or $\beta(s) = \beta(t) \sqcup \{u\}$. \label{item:nice1}
    \item[(N2)] There is $s \In V(P)$ of degree one such that $|\beta(s)| = 1.$\label{item:nice2}
\end{enumerate}
Note that our definition of nice path decomposition is a special case of nice tree decomposition,
also used in \cite{Bod+98, Flu+24}. Analogically to {\cite[Lemma~13.1.2]{Klo+94}}, for nice tree decomposition,
every graph $F$ has a nice path decomposition of width at most $\pw F.$

\repeatproposition{decomposing_simplicial_walk}
\begin{proof}
    Let $(P, \beta)$ be a nice path decomposition of $F$ of width at most $k$, assume that $V(P) = [t]$ and $E(P) = \{(i, i+1) \mid i \in [t-1]\}$ for some $t \In \Nat.$
    Take a $k$-skeleton $K(k)$ of a simplicial complex $K = \powset{V(F)}\setminus \{\emptyset\},$
    and define $\vecsig = (\beta(i) \mid i \in [t]).$
    Since $\vecsig$ is a walk in the Hasse diagram $\hasse {K(k)}$ by (N1) starting in the simplex $\vecsig(1)\In K(k)$ by (N2).
    Finally, if there is a vertex incoming more than once, or less than once, we get a contradiction with the definition path decomposition of $F$ by (T2).

    For the other direction, consider $k$-simplicial walk $\vecsig,$ of length $t$ decomposing in $F.$
    Use path $P$ on vertices $[t]$ as above, and define $\beta(i) = \vecsig(i)$ for each $i \In [t].$
    It remains to show that $(P, \beta)$ is a path decomposition of $F$ of width at most $k.$
    For $(P, \beta)$ condition (T1) is exactly (D2).
    Condition (D1) implies that each vertex $u \in V(F)$ is contained in bags of a single interval $\{i, i+1, \dots, j\} \subseteq [t]$ for some $i \le j$, and this is for path decomposition equivalent to (T2).
    Moreover, note this $(P, \beta)$ is also nice as (N1) and (N2) follow from the definition of a simplicial walk.
\qed\end{proof}

\section{Pebble Forest Covers and a proof of Proposition~\ref{prop:cd_path_decomposition}}\label{app:pebble_forest_covers}

\subsubsection{Treedepth.}
We use the following definition extended to the empty graph.
Let $F=(V, E)$ be a graph and $F_1, F_2, \dots F_p$ be its connected components for some $p \In \Nat$
 then a \emph{tree depth} of $F$ is defined recursively as follows:
\begin{align*}
    \td F = \begin{cases}
        0 & \text{ if } p = |V| = 0,\\
        1 + \max_{v \in V}(F[V\setminus \{v\}]) & \text{ if } p = 1,\\
        \max_{i \In [p]} \td {F_i} & \text{ otherwise .}
    \end{cases}
\end{align*}

\subsubsection{Pebble forest covers.}
We restate the definition as given in \cite{Abr+21, Rat+23}.
A chain in a poset $(P, \le)$ is a subset of pairwise comparable elements.
A \emph{forest} is a poset $(P, \le)$
where the subset $P_x = \{y \mid y \le x\}$ is a chain for each $x \In P.$
A \emph{depth of a forest} is the maximum length of a chain in $P.$
For a given graph $F$, 
a \emph{forest cover} is a forest $(V(G), \le)$ such that
for each $uv \In E(F)$ it holds that $u \le v$ or $v \le u.$
For such graph and its forest cover and integer $k\ge 0$,
a \emph{$k$-pebbling function} is a map $p\colon V(F) \to [k]$
such that for each $uv\In E(F)$ with $u\le v$, we have $p(u) \neq p(w)$ 
for any $w$ satisfying $u<w \le v.$
Finally, a \emph{$k$-pebble forest cover of depth $d$}
is a forest cover of depth $d$ together with a $k$-pebbling function.
We denote the graphs admitting a $k$-pebble forest cover of depth $d$ by $\mathcal{T}_k^d.$

\subsubsection{Labeled pebble forest covers.}
For a given $k$-labeled graph $F^\vecu$ and integer $\ell \ge k$, 
a $\ell$-pebble forest cover of $F^\vecu$ is a forest cover $(V(F), \le)$
and a $k$-pebbling function $p\colon V(F) \to [\ell]$,
such that the following holds:
\begin{enumerate}[label=(\roman*)]
    \item[(P1)] the set of labeled vertices $\sig = \{\vecu(1), \vecu(2), \ldots, \vecu(k)\}$ is a chain,
    \item[(P2)] every vertex $v \In V(F) \setminus \sig$ is less then all in $\sig$, that is $v \le \min \sig$,
    \item[(P3)] pebbling function on $\sig$ is injective, that is $p|_\sig$ is injective.
\end{enumerate}
Denote the class of all $\ell$-labeled graphs such that $\ell \le k$ with a \emph{$k$-pebble forest cover of depth $d$} by $\mathcal{LT}_{k}^d.$

\repeatproposition{cd_path_decomposition}
\begin{proof}
    Consider $F$ and its caterpillar decomposition $\vec D = ((\beta_i, L_i) \mid i \In [t])$ of width at most $k$ and height at most $h.$
    It holds that $\ell$-labeled graph in $L^\vecu \In \mathcal{LT}_{k}^d$ is of treedepth at most $\ell+d$ \cite{Rat+23}.
    From height of $\vec D$, we have for $L_i^{\stup \vecbeta i}$ that treedepth $\ell + d_i = |\beta_i|  + d_i \le k + h$ for $i \In [t].$
    By \cite[Lemma 11]{Bod+95}, stating $\pw{F} \le \td{F} - 1$ and definition of treedepth, we obtain $\pw{F}\le k + h - 1.$

    On the other hand, let $F$ be a graph and $(P, \beta)$ its path decomposition of width at most $k$.
    By \cref{prop:decomposing_simplicial_walk}, $(P, \beta)$ implies existence of a $k$-simplicial walk $\vecbeta$ of length $t'$ that is decomposing in $F$.
    Choose $L_i = F[\vecbeta(i)]$ for each $i \In [t']$ so that Conditions (C1) and (C3) are satisfied trivially and spine $S = F$ satisfying Condition (C2).
    We obtain a caterpillar decomposition $\vec D = ((L_i, \vecbeta(i)) \mid i \In [t'])$ of width at most $k$ and height at most $|\vecbeta(i)| + 0 - k \le k+1 -k \le  1.$
\qed\end{proof}

\end{document}